\Crefname{myclaim}{Claim}{Claims}
\newcommand{\Z}{\mathbb{Z}}
\newcommand{\N}{\mathbb{N}}
\newcommand{\R}{\mathbb{R}}
\newcommand{\bb}{\mathbbm{1}}
\newcommand{\set}[1]{{\{#1\}}}
\DeclareMathOperator{\cpp}{CP}
\DeclareMathOperator{\cyc}{cyc}
\newcommand*{\myproofname}{Proof of claim}
\tikzstyle{vertex}=[circle, draw, fill=white, inner sep=0.05cm, minimum width=1ex, align=center]
\tikzstyle{cut}=[dashed]
\tikzstyle{area2} = [draw=black!25!white, line width=0.28cm, line cap=round, line join=round]
\tikzstyle{area3} = [draw=black!10!white, line width=0.43cm, line cap=round, line join=round]
\tikzstyle{arealight2} = [draw=black!10!white, fill=black!10!white, line width=0.43cm]
\begin{document}

\title{\textbf{Chorded cycle facets of the clique partitioning polytope}}
\author{Jannik Irmai$^a$, Lucas Fabian Naumann$^a$, Bjoern Andres$^{a,b,1}$}
\date{\begin{minipage}{\columnwidth}\centering
	$^a$Faculty of Computer Science, TU Dresden, Germany\\
	$^b$Center for Scalable Data Analytics and AI (ScaDS.AI) Dresden/Leipzig, Germany\\
	\end{minipage}}

\footnotetext[1]{Correspondence: \texttt{bjoern.andres@tu-dresden.de}}
\setcounter{footnote}{1}

\maketitle

\begin{abstract}
    The $q$-chorded $k$-cycle inequalities are a class of valid inequalities for the clique partitioning polytope. It is known that for $q \in \{2, \tfrac{k-1}{2}\}$, these inequalities induce facets of the clique partitioning polytope if and only if $k$ is odd. Here, we characterize such facets for arbitrary $k$ and $q$. More specifically, we prove that the $q$-chorded $k$-cycle inequalities induce facets of the clique partitioning polytope if and only if two conditions are satisfied: $k = 1$ mod $q$, and if $k=3q+1$ then $q=3$ or $q$ is even. This establishes the existence of many facets induced by $q$-chorded $k$-cycle inequalities beyond those previously known.

\end{abstract}

\section{Introduction}
Given a complete graph with edge values that can be both positive and negative real numbers, the \emph{clique partitioning problem} consists in finding a partition of the graph into disjoint cliques that maximizes the value of the edges within the cliques.
This problem has a wide range of applications, including the aggregation of binary relations \cite{grotschel1989cutting}, community detection in social networks \cite{brandes2007modularity}, and group technology \cite{oosten2001clique}.
Its feasible solutions are encoded by binary vectors with one entry for each edge of the graph, where an entry is 1 if and only if the associated edge is contained in a clique.
The convex hull of these vectors is called the \emph{clique partitioning polytope} \cite{grotschel1990facets}.
While a complete outer description of this polytope in terms of its facets is not known, many classes of valid and facet-inducing inequalities have been discovered and are described in the literature; see \Cref{section:related-work}.
From a practical perspective, such valid inequalities are essential for strengthening linear programming bounds and thus accelerating branch-and-cut algorithms.
One such class of valid inequalities is that of the $2$-chorded $k$-cycle inequalities introduced in \cite{grotschel1990facets} and shown to induce a facet if and only if the cycle has odd length $k$.
This class is generalized in \cite{muller2002transitive} to the $q$-chorded $k$-cycle inequalities, where $q$ is any integer between $2$ and $\tfrac{k}{2}$.
These inequalities are valid for the clique partitioning problem \cite{muller2002transitive}, but no claims have been made about facets induced by these inequalities.
Recently, it was shown that for the special case of $q=\tfrac{k-1}{2}$, the $q$-chorded $k$-cycle inequalities induce facets of the clique partitioning polytope \cite{andres2023polyhedral}.

In this article, we establish for arbitrary $k$ and $q$ the exact condition under which the $q$-chorded $k$-cycle inequalities induce facets of the clique partitioning polytope (\Cref{thm:main-theorem}).
For the special cases of $q=2$ and $q=\tfrac{k-1}{2}$, this condition specializes to the properties previously known.
In its general form, it implies the existence of many facets induced by $q$-chorded $k$-cycle inequalities for $2 < q < \tfrac{k-1}{2}$ previously unknown.
\section{Related Work}
\label{section:related-work}

The clique partitioning problem is closely related to \emph{coalition structure generation in weighted graph games} \cite{voice2012coalition,bachrach2013optimal}, the \emph{multicut problem} \cite{deza1992clique,chopra1993partition}, and \emph{correlation clustering} \cite{bansal2004correlation,demaine2006correlation}.
For complete graphs, these problems are equivalent in the sense that there are bijections between their sets of optimal solutions.
However, they differ with regard to the hardness of approximation.
The clique partitioning problem is \textsc{np}-hard to approximate within a factor $\mathcal{O}(n^{1-\epsilon})$, where $n$ is the number of nodes and $\epsilon$ is any positive constant \cite{bachrach2013optimal,zuckerman2006linear}.
Exact algorithms based on cutting plane methods are discussed, e.g. in \cite{grotschel1989cutting,oosten2001clique,sorensen2020separation}.

The clique partitioning polytope is introduced and studied in detail for the first time in \cite{grotschel1990facets}.
Today, many classes of valid inequalities are established, along with conditions under which they induce facets.
Examples include $2$-chorded cycle, path, and even wheel inequalities \cite{grotschel1990facets}, (generalized) $2$-partition inequalities \cite{grotschel1990facets,grotschel1990composition}, (bicycle) wheel inequalities \cite{chopra1993partition}, clique-web inequalities \cite{deza1992clique}, hypermetric inequalities \cite{deza1997geometry}, and further generalizations of the preceding inequalities \cite{oosten2001clique}.
Techniques for deriving additional facets from known facets are studied in \cite{grotschel1990composition,deza1992clique,chopra1995facets,oosten2001clique}.
For instance, it has been conjectured in \cite{grotschel1990facets} and proven independently in \cite{deza1992clique,chopra1995facets,bandelt1999lifting} that \emph{zero-lifting} holds for the clique partitioning polytope; see \Cref{thm:zero-lifting} below.
However, even for the complete graph with only $n=5$ nodes, the clique partitioning polytope has many facets that have not yet been characterized \cite{deza1991complete}.

Of particular interest here are the \emph{$2$-chorded cycle} inequalities defined in \cite{grotschel1990facets} with respect to a cycle of length $k$ and its $2$-chords.
They induce facets if and only if the cycle is odd \cite{grotschel1990facets}.
Of similar interest here are the \emph{half-chorded odd cycle} inequalities defined in \cite{andres2023polyhedral} with respect to a cycle of odd length $k$ and its $\tfrac{k-1}{2}$-chords.
They all induce facets \cite{andres2023polyhedral}.
Both belong to the more general class of \emph{$q$-chorded cycle} inequalities discovered and shown to be valid for the clique partitioning polytope in \cite{muller2002transitive} using ideas introduced there that unify the study of various combinatorial optimization problems over transitive structures, including clique partitioning.
In our characterization of the facets induced by these inequalities, the length $k$ of the cycle plays a crucial role.
Thus, we refer to these inequalities as $q$-chorded $k$-cycle inequalities throughout this article.

A closely related class of inequalities that is also defined with respect to chorded cycles is discussed in \cite{poljak1992max} for the cut polytope.
These inequalities have different coefficients than those discussed here as the cut polytope is defined with respect to partitions into exactly two sets in contrast to an arbitrary number of sets for clique partitioning.
\section{Preliminaries}
In this section, we recall the definition of the clique partitioning problem and polytope and state some basic properties.

For any $n \in \N$, let $K_n = (V_n, E_n)$ be the complete undirected graph with $n$ nodes, i.e.~$V_n = \{0,\dots,n-1\}$ and $E_n = \binom{V_n}{2}$.
An edge subset $A \subseteq E_n$ is called a \emph{clique partition} of $K_n$ if and only if there exists a partition $\Pi$ of the nodes $V_n$ such that $A$ contains precisely those edges that connect nodes that are in the same set of $\Pi$, i.e.~$A = \left\{ \{i,j\} \in E \mid \exists U \in \Pi: i,j \in U \right\}$. 
Clearly, there exists a one-to-one relation between the partitions of $V_n$ and the clique partitions of $K_n$.
For a partition $\Pi$ of $V_n$, let $x^\Pi = \bb_{A}$ be the characteristic vector of the clique partition $A$ that is associated with $\Pi$.
I.e., $x^\Pi_{\{i,j\}} = 1$ if and only if there exists $U \in \Pi$ with $i,j \in U$ for all $\{i,j\} \in E_n$.
We call $x^\Pi$ the feasible vector \emph{induced by} the partition $\Pi$.
The following lemma characterizes the characteristic vectors of clique partitions in terms of triangle inequalities.

\begin{lemma}[\cite{grotschel1990facets}]
    A binary vector $x \in \{0,1\}^{E_n}$ is the characteristic vector of a clique partition if and only if 
    \begin{align*}
        x_{\{i,j\}} + x_{\{j, k\}} - x_{\{i,k\}} \leq 1 \quad \text{for all pairwise distinct } i,j,k \in V_n \enspace.
    \end{align*}
\end{lemma}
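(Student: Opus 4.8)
The plan is to prove both implications by recognizing that the triangle inequalities encode precisely the transitivity of the relation ``lies in a common block of the partition''.

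For the forward implication I would start from a partition $\Pi$ of $V_n$, set $x = x^\Pi$, fix pairwise distinct $i,j,k \in V_n$, and split into two cases. If $x_{\{i,j\}} = x_{\{j,k\}} = 1$, then $i$ and $j$ lie in a common block and $j$ and $k$ lie in a common block; since the blocks of $\Pi$ are pairwise disjoint and both of these blocks contain $j$, they coincide, so $i$ and $k$ lie in a common block and $x_{\{i,k\}} = 1$, making the left-hand side equal to $1 + 1 - 1 = 1$. Otherwise at least one of $x_{\{i,j\}}, x_{\{j,k\}}$ is $0$, and since all entries lie in $\{0,1\}$ the left-hand side is at most $1$. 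Hence every triangle inequality is satisfied.

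For the reverse implication I would assume $x \in \{0,1\}^{E_n}$ satisfies all triangle inequalities and build a partition from it. Define a relation on $V_n$ by declaring $i$ and $j$ related whenever $i = j$ or $x_{\{i,j\}} = 1$; reflexivity and symmetry are immediate. For transitivity, take $i,j,k$ with $i$ related to $j$ and $j$ related to $k$: if two of the three coincide the conclusion is trivial, and otherwise they are pairwise distinct with $x_{\{i,j\}} = x_{\{j,k\}} = 1$, so the triangle inequality forces $x_{\{i,k\}} \ge x_{\{i,j\}} + x_{\{j,k\}} - 1 = 1$, and since $x_{\{i,k\}} \in \{0,1\}$ this yields $x_{\{i,k\}} = 1$, i.e.\ $i$ is related to $k$. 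Thus the relation is an equivalence relation; letting $\Pi$ be its set of classes, for every edge $\{i,j\} \in E_n$ (so $i \neq j$) we have $x^\Pi_{\{i,j\}} = 1$ exactly when $i$ and $j$ lie in a common class, which happens exactly when $x_{\{i,j\}} = 1$. Therefore $x = x^\Pi$ is the characteristic vector of the clique partition associated with $\Pi$.

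I do not expect a genuine obstacle here: the lemma is essentially a reformulation of transitivity, and the only points that require care are handling the degenerate triples in the transitivity argument and invoking integrality, which promotes the inequality $x_{\{i,k\}} \ge 1$ to the equality $x_{\{i,k\}} = 1$.
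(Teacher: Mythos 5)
Your proof is correct: the forward direction correctly uses disjointness of the blocks, and the reverse direction correctly builds an equivalence relation whose transitivity is exactly the triangle inequalities combined with integrality. The paper itself gives no proof of this lemma (it is quoted from Gr\"otschel and Wakabayashi with a citation only), and your argument is the standard one for this classical fact, so there is nothing to compare beyond confirming it is sound.
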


Let $c:E_n \to \R$ assign a value to each edge of $K_n$.
The \emph{clique partitioning problem} consists in finding a clique partition of $K_n$ of maximal value.
It has the form of the integer linear program
\begin{align}
    \max \quad & \sum_{\{i,j\} \in E_n} c_{\{i,j\}} x_{\{i,j\}} \notag \\
    \text{s.t.} \quad & x \in \Z^{E_n} \notag \\
    & 0 \leq x_{\{i,j\}} \leq 1 && \text{for all } \{i,j\} \in E_n \label{eq:box} \\
    & x_{\{i,j\}} + x_{\{j, k\}} - x_{\{i,k\}} \leq 1 && \text{for all pairwise distinct } i,j,k \in V_n \enspace. \label{eq:triangle}
\end{align}

The \emph{clique partitioning polytope} is defined as the convex hull of all feasible solutions of the clique partitioning problem and is denoted by
\[
    \cpp_n = \text{conv}\left\{ x \in \{0,1\}^{E^n} \;\middle|\; x \text{ satisfies } \eqref{eq:triangle}\right\} \enspace. 
\]
The vertices of this polytope are precisely the characteristic vectors of clique partitions.
As observed in \cite{grotschel1990facets}, the clique partitioning polytope $\cpp_n$ is full dimensional, i.e.~$\dim \cpp_n = |E_n| = \binom{n}{2}$, as it contains the zero vector and all unit vectors.

The following theorem states that an inequality that induces a facet of one clique partitioning polytope also induces a facet of every larger clique partitioning polytope.

\begin{theorem}[Zero Lifting \cite{deza1992clique,chopra1995facets,bandelt1999lifting}]\label{thm:zero-lifting}
    Let $n \in \N$, $a \in \Z^{E_n}$, and $\alpha \in \Z$ such that $a^\top x \leq \alpha$ induces a facet of $\cpp_n$.
    Then, for every $m \in \N$, $m \geq n$, the \emph{lifted inequality} $\bar{a}^\top \bar{x} \leq \alpha$ induces a facet of $\cpp_m$ where $\bar{a}_{\{i,j\}} = a_{\{i,j\}}$ if $\{i,j\} \in E_n$, and $\bar{a}_{\{i,j\}} = 0$ otherwise, for all $\{i,j\} \in E_m$.
\end{theorem}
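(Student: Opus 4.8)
The plan is to reduce, by an easy induction on $m$, to the case $m = n+1$; write $n$ for the added node, so that $V_{n+1} = V_n \cup \{n\}$ and the new edges are $\{j,n\}$, $j \in V_n$. Validity of $\bar a^\top \bar x \le \alpha$ over $\cpp_{n+1}$ is immediate: restricting any clique partition of $K_{n+1}$ to $V_n$ yields a clique partition of $K_n$, and since $\bar a$ vanishes on the new edges, $\bar a^\top \bar x = a^\top(\bar x|_{E_n}) \le \alpha$ for every vertex $\bar x$ of $\cpp_{n+1}$. Because $\bar a \neq 0$, the face $\bar F := \{\bar x \in \cpp_{n+1} : \bar a^\top \bar x = \alpha\}$ lies in an affine hyperplane, so $\dim \bar F \le \binom{n+1}{2} - 1$; the entire task is the reverse inequality.

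For this, let $L$ be the linear span of all differences $p - q$ with $p, q \in \bar F$, so that $\dim \bar F = \dim L$, and I aim for $\dim L \ge \binom{n+1}{2} - 1$. Since $F := \{x \in \cpp_n : a^\top x = \alpha\}$ is a facet of the full-dimensional polytope $\cpp_n$, the analogous span $L_0 \subseteq \R^{E_n}$ of differences of points of $F$ has dimension $\binom{n}{2} - 1$. For each vertex $v$ of $F$, with node partition $\Pi$, I would use the vertex $v^\emptyset$ of $\cpp_{n+1}$ obtained by adding $\{n\}$ as a new singleton block, and, for each $i \in V_n$, the vertex $v^i$ obtained by adding $n$ to the block of $\Pi$ containing $i$; all of these lie in $\bar F$ because they agree with $v$ on $E_n$. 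The differences $v^\emptyset - w^\emptyset$, over vertices $v, w$ of $F$, span a copy of $L_0$ supported on the coordinates $E_n$, while $v^\emptyset$ has all new coordinates $0$ and $v^i - v^\emptyset$ is supported on the new edges, equal (under $\{j,n\} \leftrightarrow j$) to the indicator $\bb_B$ of the block $B \ni i$ of $\Pi$. Hence $L$ contains the direct sum of this copy of $L_0$ and the span $L_1 \subseteq \R^{V_n}$ of all these block indicators, giving $\dim L \ge \binom{n}{2} - 1 + \dim L_1$, and it remains to prove $\dim L_1 = n$, i.e.\ that the indicators of the blocks of the partitions of vertices of $F$ span $\R^{V_n}$.

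This spanning statement is where I expect the genuine work to lie. If $\alpha = 0$ it is easy: the zero vector then lies in $F$, so the all-singletons partition occurs among the vertex partitions of $F$ and already contributes every unit vector $\bb_{\{j\}}$. If $\alpha \neq 0$, I would argue by contradiction: if $d \in \R^{V_n}$ satisfies $\sum_{j \in B} d_j = 0$ for every block $B$ of every vertex partition of $F$, then the linear functional $g(x) = \sum_{\{i,j\} \in E_n}(d_i + d_j)\,x_{\{i,j\}}$ satisfies $g(x^\Pi) = \sum_{B \in \Pi}(|B|-1)\sum_{j \in B} d_j = 0$ for each such partition $\Pi$, hence $g$ vanishes on $\mathrm{aff}(F) = \{x : a^\top x = \alpha\}$; since $\alpha \neq 0$ this hyperplane misses the origin and thus linearly spans $\R^{E_n}$, so $g \equiv 0$, i.e.\ $d_i + d_j = 0$ for all distinct $i, j$, whence $d = 0$ for $n \ge 3$ (the cases $n \le 2$ being immediate). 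This contradiction completes the proof. The only delicate point is this last step — forcing $d = 0$ when $\alpha \neq 0$ — while everything else rests on the elementary observation that the placement of the new node changes none of the old coordinates.
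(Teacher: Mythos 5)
The paper does not prove this theorem: it is imported from the literature (\cite{deza1992clique,chopra1995facets,bandelt1999lifting}) and used as a black box, so there is no in-paper proof to compare against. Your argument is a correct, self-contained proof along the standard lines of those references: reduce to adding one node, embed the facet $F$ of $\cpp_n$ into the lifted face via the ``new node as a singleton'' vertices $v^\emptyset$, and then gain the missing $n$ dimensions from the vectors $v^i-v^\emptyset$, whose restrictions to the new edges are the indicator vectors of the blocks of the vertex partitions of $F$. The dimension count, the direct-sum decomposition across the old and new coordinates, and the functional argument showing that these block indicators span $\R^{V_n}$ when $\alpha\neq 0$ (via $g(x^\Pi)=\sum_{B\in\Pi}(|B|-1)\sum_{j\in B}d_j$ and the fact that the affine hyperplane $a^\top x=\alpha$ linearly spans $\R^{E_n}$ when $\alpha\neq 0$) are all sound.

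One caveat: your parenthetical ``the cases $n\le 2$ being immediate'' is not just unproved but actually false for $n=2$ and $\alpha\neq 0$. The inequality $x_{\{0,1\}}\le 1$ induces a facet of $\cpp_2=[0,1]$, yet its zero-lifting cuts out only the $1$-dimensional face $\mathrm{conv}\{(1,0,0),(1,1,1)\}$ of the $3$-dimensional polytope $\cpp_3$; consistently, your own argument there yields the single block indicator $(1,1)$, which does not span $\R^2$. So the theorem as stated silently excludes this trivial case (the cited sources impose conditions ruling it out), and your proof is complete and correct exactly for $n\ge 3$ --- which covers every use the paper makes of the theorem, since the support graphs considered have $k\ge 4$ nodes. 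It would be cleaner to state the restriction $n\ge 3$ explicitly rather than to claim the small cases are immediate.
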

\section{Chorded cycle inequalities}
\begin{figure}[t]
    \centering
    \begin{tikzpicture}[baseline=0]
    \def \r{1.2cm}
    \def \q{3}
    \def \p{4}
    \pgfmathtruncatemacro{\n}{\p*\q+1}
    \pgfmathtruncatemacro{\nn}{\n-1}
    \def \a{360/\n}

    \foreach \i in {0,...,\nn}
    {
        \node[vertex] (\i) at (\i*\a:\r) {};
    }
    \foreach \i in {0,...,\nn}{
        \pgfmathtruncatemacro{\j}{mod(\i+1,\n)}%
        \pgfmathtruncatemacro{\k}{mod(\i+\q,\n)}%
        \draw (\i) -- (\j);
        \draw[dashed] (\i) -- (\k);
    }
\end{tikzpicture}
    \hfill
    \begin{tikzpicture}[baseline=0]
    \def \r{1.2cm}
    \def \rs{0.1*\r}
    \def \q{3}
    \def \p{4}
    \pgfmathtruncatemacro{\n}{\p*\q+1}
    \pgfmathtruncatemacro{\nn}{\n-1}
    \def \a{360/\n}

    \draw[arealight2] plot [smooth cycle, tension=0.2] coordinates {
    	(3*\a:\r) 
    	(4*\a:\r) 
    	(7*\a:\rs)
    	(10*\a:\r) 
    	(12*\a:\r) 
    	(\a:\rs)
   	};

    \foreach \style in {area3, area2}
    {
	    \draw[\style] (0:\r) arc (0:2*\a:\r);
	    \draw[\style] (3*\a:\r) arc (3*\a:4*\a:\r);
	    \draw[\style] (5*\a:\r) arc (5*\a:7*\a:\r);
	    \draw[\style] (8*\a:\r) arc (8*\a:9*\a:\r);
	    \draw[\style] (10*\a:\r) arc (10*\a:12*\a:\r);
    }

    \foreach \i in {0,...,\nn}
    {
        \node[vertex] (\i) at (\i*\a:\r) {};
    }
\end{tikzpicture}
    \hfill
    \begin{tikzpicture}[baseline=0]
    \def \r{1.2cm}
    \def \rs{0.1*\r}
    \def \q{3}
    \def \p{4}
    \pgfmathtruncatemacro{\n}{\p*\q+1}
    \pgfmathtruncatemacro{\nn}{\n-1}
    \def \a{360/\n}

    \draw[arealight2] plot [smooth cycle, tension=0.2] coordinates {
		(0*\a:\r) 
    	(3*\a:\r) 
		(5*\a:\rs)
    	(7*\a:\r) 
    	(9*\a:\r) 
		(11*\a:\rs)
	};

    \foreach \style in {area3, area2}
    {
	    \draw[\style] (0:\r) arc (0:3*\a:\r);
	    \draw[\style] (4*\a:\r) arc (4*\a:6*\a:\r);
	    \draw[\style] (7*\a:\r) arc (7*\a:9*\a:\r);
	    \draw[\style] (10*\a:\r) arc (10*\a:12*\a:\r);
	}

    \foreach \i in {0,...,\nn}
    {
        \node[vertex] (\i) at (\i*\a:\r) {};
    }
\end{tikzpicture}
    \hfill
    \begin{tikzpicture}[baseline=0]
    \def \r{1.2cm}
    \def \rs{0.1*\r}
    \def \q{3}
    \def \p{4}
    \pgfmathtruncatemacro{\n}{\p*\q+1}
    \pgfmathtruncatemacro{\nn}{\n-1}
    \def \a{360/\n}

    \draw[arealight2] plot [smooth cycle, tension=0.2] coordinates {
    	(3*\a:\r) 
    	(4*\a:\r) 
    	(6*\a:\rs) 
    	(8*\a:\r) 
    	(9*\a:\r) 
    	(10*\a:\rs) 
    	(11.9*\a:\r) 
    	(12.1*\a:\r) 
    	(0*\a:\rs) 
   	};

    \foreach \style in {area3, area2}
    {
	    \draw[\style] (0:\r) arc (0:2*\a:\r);
	    \draw[\style] (3*\a:\r) arc (3*\a:4*\a:\r);
	    \draw[\style] (5*\a:\r) arc (5*\a:7*\a:\r);
	    \draw[\style] (8*\a:\r) arc (8*\a:9*\a:\r);
	    \draw[\style] (10*\a:\r) arc (10*\a:11*\a:\r);
	    \draw[\style] (12*\a:\r) arc (12*\a:12.001*\a:\r);
	}

    \foreach \i in {0,...,\nn}
    {
        \node[vertex] (\i) at (\i*\a:\r) {};
    }
\end{tikzpicture}
    \caption{Depicted on the left is the graph associated with the $q$-chorded $k$-cycle inequality for $k=13$ and $q=3$ where the solid lines represent edges $\{i,i+1\}$ and the dashed lines represent $q$-chords $\{i,i+q\}$ for $i \in \Z_k$.
    Depicted on the right are three partitions of the $k$ nodes whose components are indicated by light shaded areas.
    The respective induced cycle partitions are depicted by a darker shade.
    For example, the first partition consists of $4$ components and the induced cycle partition has $5$ components.
    The first two partitions induce feasible solutions that satisfy the $q$-chorded $k$-cycle inequality at equality while the third partition does not.
    The first partition satisfies \ref{item:eq-a} of \Cref{lem:part}, the second partition satisfies \ref{item:eq-b}, and the third partition satisfies neither.
    In all subsequent figures, we will keep the introduced convention of depicting partitions by light shaded areas and induced cycle partitions by darker shaded areas.
    }
    \label{fig:cycle-partition}
\end{figure}

In this section, we recall from \cite{muller2002transitive} the class of $q$-chorded $k$-cycle inequalities and establish a necessary and sufficient condition for such an inequality to induce a facet of the clique partitioning polytope.
Thanks to the zero lifting theorem (\Cref{thm:zero-lifting}), we can restrict the analysis of an inequality whose support graph consists of $k$ nodes to the polytope $\cpp_k$.

For our discussion involving cycles, it will be convenient to identify the nodes of the graph with the integers modulo $k$, i.e.~$V_k = \Z_k$.
This will allow us, e.g.~to write the edge set $E = \left\{\{0,1\},\{1,2\},\dots,\{k-2,k-1\},\{0, k-1\}\right\}$ of a cycle more conveniently as $E=\left\{\{i,i+1\} \mid i \in \Z_k\right\}$.
Given the cycle defined by this edge set $E$, an edge $\{i,i+p\}$ with $i,p \in \Z_k$, $p \neq 0$ is called a \emph{$p$-chord} of that cycle.

\begin{definition}[\cite{muller2002transitive}]
    For any $q, k \in \N$ with $2 \leq q \leq k / 2$, the $q$-chorded $k$-cycle inequality is defined as
    \begin{align}\label{eq:q-chored-k-cycle}
        \sum_{i \in \Z_k} \left( x_{\{i,i+1\}} - x_{\{i, i+q\}} \right) \leq k - \left\lceil \tfrac{k}{q} \right\rceil \enspace .
    \end{align}
\end{definition}

An example is depicted on the left of \Cref{fig:cycle-partition}.
The validity of the $q$-chorded $k$-cycle inequalities for the clique partitioning polytope is proven in \cite{muller2002transitive} in a more general context.
For completeness, we include a simple proof of the following lemma in \Cref{app:proof-valid}.

\begin{lemma}\label{lem:valid}
    For any $q, k \in \N$ with $2 \leq q \leq k / 2$, the $q$-chorded $k$-cycle inequality \eqref{eq:q-chored-k-cycle} is a Chv\'atal-Gomory cut with respect to the system of box \eqref{eq:box} and triangle \eqref{eq:triangle} inequalities.
    In particular, it is valid for the clique partitioning polytope $\cpp_k$.
\end{lemma}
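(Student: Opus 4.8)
The plan is to realise inequality \eqref{eq:q-chored-k-cycle} as a Chv\'atal--Gomory cut by exhibiting explicit nonnegative multipliers for the triangle inequalities \eqref{eq:triangle} and the lower box inequalities $-x_{\{i,j\}} \le 0$ contained in \eqref{eq:box}. The basic building block is a telescoped bound: for each $i \in \Z_k$, adding the $q-1$ triangle inequalities $x_{\{i,\,i+j\}} + x_{\{i+j,\,i+j+1\}} - x_{\{i,\,i+j+1\}} \le 1$ for $j = 1, \dots, q-1$ makes the intermediate terms $x_{\{i,i+2\}}, \dots, x_{\{i,i+q-1\}}$ cancel in pairs and leaves
\begin{align*}
    \sum_{j=0}^{q-1} x_{\{i+j,\,i+j+1\}} \;-\; x_{\{i,\,i+q\}} \;\le\; q-1 . \tag{$\star_i$}
\end{align*}
Thus each $(\star_i)$ is a nonnegative combination of triangle inequalities. (All the triples $\{i,\,i+j,\,i+j+1\}$ involved are genuinely triples of distinct nodes because $1 \le j \le q-1 \le k/2 - 1$.)

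Next I would form the combination $\tfrac{1}{q} \sum_{i \in \Z_k} (\star_i)$ and add, for every $i \in \Z_k$, the lower box inequality $-x_{\{i,i+q\}} \le 0$ with multiplier $\tfrac{q-1}{q}$. On the left-hand side, a cycle edge $\{m,m+1\}$ lies in exactly $q$ of the inequalities $(\star_i)$, namely those with $i \in \{m-q+1, \dots, m\}$, and so receives total coefficient $q \cdot \tfrac{1}{q} = 1$; a $q$-chord $\{i,i+q\}$ receives $-\tfrac{1}{q}$ from $(\star_i)$ and a further $-\tfrac{q-1}{q}$ from its box inequality, for a total of $-1$. Hence the combination reads
\begin{align*}
    \sum_{i \in \Z_k} \bigl( x_{\{i,i+1\}} - x_{\{i,i+q\}} \bigr) \;\le\; \tfrac{1}{q}\, k(q-1) + 0 \;=\; k - \tfrac{k}{q} ,
\end{align*}
and since $k \in \N$, rounding the right-hand side down gives $\lfloor k - \tfrac{k}{q}\rfloor = k - \lceil \tfrac{k}{q}\rceil$, which is precisely \eqref{eq:q-chored-k-cycle}.

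Finally, the left-hand side of the last display is a signed sum of coordinate variables and hence has integer coefficients, so this derivation is a legitimate Chv\'atal--Gomory cut of the box and triangle inequalities; its validity for $\cpp_k$ then follows at once, since the integer points feasible for \eqref{eq:box}--\eqref{eq:triangle} are exactly the vertices of $\cpp_k$. I do not expect a serious obstacle here: the only step that takes any thought is the choice of multipliers. A naive attempt using $\lfloor k/q \rfloor$ pairwise disjoint telescoped blocks, together with $x_{\{m,m+1\}} \le 1$ for the $k \bmod q$ uncovered cycle edges and $x_{\{i,i+q\}} \ge 0$ for the uncovered chords, yields only the weaker bound $k - \lfloor k/q\rfloor$ when $q \nmid k$; the point of spreading the weight $\tfrac{1}{q}$ uniformly over all $k$ telescoped blocks is that it produces the fractional right-hand side $k - k/q$, whose floor is the sharp value $k - \lceil k/q\rceil$. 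One may also note that $2 \le q \le k/2$ prevents a $q$-chord from coinciding with a cycle edge (which would require $q \equiv \pm 1 \bmod k$), so no edge variable receives conflicting contributions; this is a non-issue for the argument in any case.
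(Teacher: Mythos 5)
Your proposal is correct and is essentially the paper's own argument: the same telescoped cycle inequalities obtained from triangle inequalities, the same box inequalities $-x_{\{i,i+q\}}\le 0$ with total weight $\tfrac{q-1}{q}$ per chord, and the same rounding $\lfloor k - \tfrac{k}{q}\rfloor = k - \lceil \tfrac{k}{q}\rceil$. The only difference is cosmetic — the paper adds $(q-1)$ copies of the box inequality to each cycle inequality before summing and dividing by $q$, whereas you divide first — but the resulting multipliers are identical.
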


The following definition and lemma serve the purpose of characterizing those vertices of the clique partitioning polytope that satisfy the $q$-chorded $k$-cycle inequalities at equality.
An example is given in \Cref{fig:cycle-partition}.
Subsequently, we will be able to state and prove the main theorem of this article.

\begin{definition}
    Let $k \in \N$ and let $\Pi$ be a partition of $\Z_k$.
    Let $\Pi^{\cyc}$ denote the node sets of the connected components of the graph $(\Z_k, E)$ with $E = \{\{i,i+1\} \mid i \in \Z_k \text{ and } \exists U \in \Pi: i,i+1 \in U\}$.
    We call $\Pi^{\cyc}$ the \emph{cycle partition induced by $\Pi$}.
\end{definition}

We remark that the partition $\Pi$ can be obtained from its induced cycle partition $\Pi^{\cyc}$ by joining some non-adjacent components.

\begin{lemma}\label{lem:part}
    Let $q, k \in \N$ with $2 \leq q \leq k / 2$ and let $\Pi$ be a partition of $\Z_k$.
    Then, the feasible solution $x^\Pi$ associated with $\Pi$ satisfies \eqref{eq:q-chored-k-cycle}
    at equality if and only if one of the two following sets of conditions is satisfied
    \begin{enumerate}[label = {(\alph*)}]
        \item \label{item:eq-a}
        \begin{enumerate}[label = {(a\arabic*)}]
            \item \label{item:eq-a1}
             $|\Pi^{\cyc}| = \lceil k / q \rceil$, and
            \item \label{item:eq-a2}
             $|U| \leq q$ for all $U \in \Pi^{\cyc}$, and
            \item \label{item:eq-a3}
             $i-j>q$ for all $U \in \Pi$, for all $U_1,U_2 \in \Pi^{\cyc}$ with $U_1 \neq U_2$ and $U_1, U_2 \subseteq U$, and for all $i \in U_1, j \in U_2$.
        \end{enumerate}
        \item \label{item:eq-b}
        \begin{enumerate}[label = {(b\arabic*)}]
            \item \label{item:eq-b1}
            $|\Pi^{\cyc}| = \lceil k / q \rceil - 1$, and 
            \item \label{item:eq-b2}
            there exists $U \in \Pi^{\cyc}$ with $|U| = q+1$  and $|U'| = q$ for all $U' \in \Pi^{\cyc} \setminus \{U\}$.
        \end{enumerate}
    \end{enumerate}
\end{lemma}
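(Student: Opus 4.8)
The plan is to rewrite the $q$-chorded $k$-cycle inequality in a purely combinatorial form and then read off the equality cases. For a partition $\Pi$ of $\Z_k$, call a cycle edge $\{i,i+1\}$ or a $q$-chord $\{i,i+q\}$ \emph{active} if its two endpoints lie in a common block of $\Pi$, and \emph{inactive} otherwise. Write $m = |\Pi^{\cyc}|$, let $g$ be the number of inactive cycle edges, and set $C = \sum_{i\in\Z_k} x^\Pi_{\{i,i+q\}}$. Since $\sum_{i\in\Z_k} x^\Pi_{\{i,i+1\}} = k - g$, the left-hand side of \eqref{eq:q-chored-k-cycle} equals $k - g - C$, so $x^\Pi$ attains equality in \eqref{eq:q-chored-k-cycle} if and only if $C + g = \lceil k/q\rceil$. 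First I would dispose of the case $m = 1$: then all nodes lie in a single block of $\Pi$, so $C = k$ and $g \le 1$, whence $C + g \ge k > \lceil k/q\rceil$ because $q \ge 2$; and since $q \le k/2$, neither \ref{item:eq-a} nor \ref{item:eq-b} can hold when $m = 1$. So assume $m \ge 2$; then the blocks $U_1,\dots,U_m$ of $\Pi^{\cyc}$ are pairwise disjoint arcs of consecutive nodes covering $\Z_k$, counting active cycle edges inside the arcs gives $\sum_j(|U_j|-1) = k - g$, hence $g = m$, and, because the arcs cover $\Z_k$, each inactive cycle edge is the unique edge separating two consecutive arcs.

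For the implication ``equality $\Rightarrow$ \ref{item:eq-a} or \ref{item:eq-b}'' I would bound $C$ from below. Every active $q$-chord is either \emph{internal} to one arc $U_j$ — there are at least $\max(|U_j|-q,0)$ of these — or joins two \emph{distinct} arcs contained in a common block of $\Pi$; writing $D \ge 0$ for the number of the latter, $C \ge \sum_j \max(|U_j|-q,0) + D$. (When $2q = k$ each $q$-chord is counted twice in $C$, which only strengthens the bound; that case is finished along the same lines and produces only \ref{item:eq-a}.) Using the elementary inequality $\max(s-q,0)+1 \ge \lceil s/q\rceil$, valid for integers $s \ge 1$ and $q \ge 2$ with equality exactly when $s \le q+1$, together with superadditivity of $\lceil\cdot\rceil$,
\begin{align*}
    C + g \;\ge\; \sum_{j=1}^{m}\bigl(\max(|U_j|-q,0)+1\bigr) \;\ge\; \sum_{j=1}^{m}\Bigl\lceil\tfrac{|U_j|}{q}\Bigr\rceil \;\ge\; \Bigl\lceil\tfrac{k}{q}\Bigr\rceil ,
\end{align*}
which incidentally gives a self-contained proof that \eqref{eq:q-chored-k-cycle} is valid (cf.\ \Cref{lem:valid}). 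Equality therefore forces $D = 0$, $|U_j| \le q+1$ for all $j$, and $\sum_j \lceil|U_j|/q\rceil = \lceil k/q\rceil$. Letting $t$ be the number of arcs of size exactly $q+1$, the last equation reads $m + t = \lceil k/q\rceil$, and I would finish by case analysis on $t$. If $t = 0$, then $m = \lceil k/q\rceil$ and $|U_j| \le q$ for all $j$, which together with $D = 0$ is \ref{item:eq-a}. If $t = 1$, then $m = \lceil k/q\rceil - 1$; combining this with $\sum_j|U_j| = k$ and $|U_j| \le q$ for the remaining arcs forces $k \equiv 1 \pmod q$ and $|U_j| = q$ for every arc other than the one of size $q+1$, i.e.\ \ref{item:eq-b}. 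And $t \ge 2$ is impossible, since it would give $k = \sum_j|U_j| \le mq + t = q\lceil k/q\rceil - t(q-1) \le (k+q-1) - 2(q-1) = k - q + 1$, contradicting $q \ge 2$.

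For the converse I would check the two cases directly. Under \ref{item:eq-a}, condition \ref{item:eq-a2} leaves no internal $q$-chord and condition \ref{item:eq-a3} amounts to $D = 0$, so $C = 0$, while \ref{item:eq-a1} gives $g = m = \lceil k/q\rceil$; hence $C + g = \lceil k/q\rceil$. Under \ref{item:eq-b}, \ref{item:eq-b1} and \ref{item:eq-b2} force $\sum_j|U_j| = mq + 1$, which must equal $k$, so $k \equiv 1 \pmod q$; then the arcs already account for all $k$ nodes, so consecutive arcs are separated by a single edge, and because every arc has size $\ge q$ one checks that any $q$-chord joining two distinct arcs joins cyclically \emph{adjacent} arcs, which never share a block of $\Pi$. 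Hence $D = 0$, $C$ equals the unique internal $q$-chord of the arc of size $q+1$ (so $C = 1$), and $g = m = \lceil k/q\rceil - 1$; hence $C + g = \lceil k/q\rceil$.

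The step I expect to be the main obstacle is the identification of ``$D = 0$'' with condition \ref{item:eq-a3} in the presence of \ref{item:eq-a1} and \ref{item:eq-a2}. One direction is immediate, since \ref{item:eq-a3} forbids \emph{all} chords of length at most $q$ between distinct arcs in a common block, in particular $q$-chords. For the other direction I would argue by contradiction: if some $p$-chord with $2 \le p \le q-1$ joined two distinct arcs $U_1, U_2$ in a common block while no $q$-chord did, then — using that every arc has size at most $q$ (so such a $q$-chord could not re-enter $U_1$) and that the total deficit $\sum_j(q-|U_j|) = q\lceil k/q\rceil - k$ is at most $q-1$ — a short computation with the arc sizes shows that $U_1$ and $U_2$ must be cyclically adjacent, which is impossible because adjacent arcs cannot lie in a common block of $\Pi$. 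Carrying out this bookkeeping carefully, together with the minor multiplicity subtlety when $2q = k$ and making the tight-packing argument for \ref{item:eq-b} precise, is where essentially all the work lies.
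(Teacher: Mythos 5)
Your proposal is correct and follows essentially the same route as the paper: the identity $\sum_{i}x_{\{i,i+1\}}=k-|\Pi^{\cyc}|$ reduces everything to comparing the number of active $q$-chords plus $|\Pi^{\cyc}|$ against $\lceil k/q\rceil$, and your excess count $\sum_j\max(|U_j|-q,0)$ is exactly the paper's quantity $m$, so your chain of inequalities is a repackaging of the paper's bounds $\sum_i x_{\{i,i+q\}}\ge m$ and $|\Pi^{\cyc}|\ge\lceil(k-m)/q\rceil$. The step you flag as the main obstacle --- deducing \ref{item:eq-a3} from the absence of active $q$-chords --- does close exactly as you sketch: if $U_1,U_2$ shared a block and contained nodes at distance at most $q$ with no active $q$-chord between them, then $U_1$, $U_2$ and all arcs strictly between them on that side would have total size at most $q$, contributing a deficit of at least $2q$ unless no arc lies between them, which contradicts the total deficit $q\lceil k/q\rceil-k\le q-1$ and leaves only the impossible adjacent case; the paper passes over this point with a single unargued sentence.
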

The proof of \Cref{lem:part} is deferred to \Cref{app:proof-part}.

\begin{theorem}\label{thm:main-theorem}
	For any $q, k \in \N$ with $2 \leq q \leq k / 2$, the $q$-chorded $k$-cycle inequality induces a facet of the clique partitioning polytope if and only if the following two conditions both hold
    \begin{enumerate}[label = {(\roman*)}]
        \item \label{item:condition-1} $k = 1$ mod $q$ 
        \item \label{item:condition-2} if $k = 3q + 1$ then $q = 3$ or $q$ is even.
    \end{enumerate}
\end{theorem}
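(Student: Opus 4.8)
The plan is to fix $k$ and $q$ and argue inside $\cpp_k$, which is legitimate by \Cref{thm:zero-lifting}. Write the inequality as $b^\top x \le \beta$ with $\beta = k - \lceil k/q\rceil$ and $b = \bb_C - \bb_D$, where $C = \set{\{i,i+1\} : i \in \Z_k}$ and $D = \set{\{i,i+q\} : i \in \Z_k}$ are the cycle edges and the $q$-chords (disjoint since $q \ge 2$), and let $F = \set{x \in \cpp_k : b^\top x = \beta}$ be the induced face. Since $\cpp_k$ is full-dimensional, $F \ne \emptyset$, and $F \ne \cpp_k$ (the origin violates the equality, as $\beta \ge 1$), the inequality induces a facet if and only if every $c \in \R^{E_k}$ for which $c^\top x$ is constant over $F$ is a scalar multiple of $b$; by \Cref{lem:part}, $F$ is the convex hull of the vertices $x^\Pi$ with $\Pi$ of type \ref{item:eq-a} or \ref{item:eq-b}. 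The pivot of the argument is that for every tight $\Pi$ one has $\sum_{i \in \Z_k} x^\Pi_{\{i,i+q\}} = \lceil k/q\rceil - |\Pi^{\cyc}|$ (combine $\sum_{i} x^\Pi_{\{i,i+1\}} = k - |\Pi^{\cyc}|$ with $b^\top x^\Pi = \beta$), which equals $0$ on vertices of type \ref{item:eq-a} and $1$ on those of type \ref{item:eq-b}; and an elementary count of the admissible block-length multisets shows that vertices of type \ref{item:eq-b} exist if and only if \ref{item:condition-1} holds, whereas vertices of type \ref{item:eq-a} always exist.

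For the \emph{necessity} of \ref{item:condition-1}: if $k \not\equiv 1 \pmod q$ there are no vertices of type \ref{item:eq-b}, so every vertex of $F$ satisfies $\sum_{i \in \Z_k} x_{\{i,i+q\}} = 0$, an equation not proportional to $b^\top x = \beta$; hence $\dim F \le \binom{k}{2} - 2$ and the inequality is not facet-defining. For the \emph{necessity} of \ref{item:condition-2}: when $k = 3q+1$ with $q$ odd and $q \ge 5$, the tight vertices can be listed explicitly --- the type-\ref{item:eq-b} ones are the $k$ cyclic shifts of the partition into arcs of lengths $(q+1,q,q)$ (no non-adjacent merge exists, since there are only three parts), and the type-\ref{item:eq-a} ones are the partitions into four consecutive arcs of lengths at most $q$ summing to $3q+1$, together with those in which one of the at most two non-adjacent pairs of arcs is merged without creating a $q$-chord --- and from this finite list one extracts a further equation satisfied by all of them, so again $\dim F < \binom{k}{2} - 1$. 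Producing this equation, and seeing why it vanishes for even $q$ and in the sporadic case $q = 3$, is the delicate point.

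For \emph{sufficiency}, assume \ref{item:condition-1} and \ref{item:condition-2}, write $k = pq + 1$ with $p = \lfloor k/q\rfloor \ge 2$, and let $c \in \R^{E_k}$ be such that $c^\top x$ is constant on $F$. I would show $c = \lambda b$ in three steps. First, \emph{$c_e = 0$ for $e \notin C \cup D$}: for $p \ge 4$ one forms tight partitions by merging non-adjacent parts of type-\ref{item:eq-a} or type-\ref{item:eq-b} partitions and reads off, from the resulting characteristic-vector differences (indicator vectors of complete bipartite patterns between the merged parts) as these parts are rotated around the cycle, a full-rank system of linear relations forcing all $c_e$ with $e \notin C \cup D$ to vanish; for $p = 3$ this supply of partitions is too small and the case must be handled by a direct enumeration of tight vertices, which is exactly where the dichotomy of \ref{item:condition-2} enters. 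Second, \emph{$c$ is constant on $C$}: comparing a type-\ref{item:eq-a} partition having a part $\{a,a+1\}$ with neighbouring singleton $\{a+2\}$ to the one having part $\{a+1,a+2\}$ with singleton $\{a\}$ shows $x^\Pi - x^{\Pi'}$ is the difference of the unit vectors on $\{a,a+1\}$ and $\{a+1,a+2\}$, whence $c_{\{a,a+1\}} = c_{\{a+1,a+2\}}$, and chaining gives a common value $\lambda$ on $C$. Third, \emph{$c = -\lambda$ on $D$}: for each $t \in \Z_k$ compare the type-\ref{item:eq-b} vertex whose $(q+1)$-part is $\{t,\dots,t+q\}$ (which exists by \ref{item:condition-1}) to the type-\ref{item:eq-a} vertex obtained from it by splitting off $\{t+q\}$ as a singleton; the difference is the sum of the unit vectors on the edges $\{t,t+q\},\{t+1,t+q\},\dots,\{t+q-1,t+q\}$, of which $\{t+q-1,t+q\} \in C$, $\{t,t+q\} \in D$, and the remaining $q-2$ lie outside $C \cup D$, so by the first two steps this collapses to $c_{\{t,t+q\}} + \lambda = 0$. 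Together the three steps give $c = \lambda(\bb_C - \bb_D) = \lambda b$, so the inequality induces a facet.

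The main obstacle is the regime $k = 3q+1$: on the necessity side, pinning down the extra equation that appears precisely when $q$ is odd and $q \ge 5$; on the sufficiency side, carrying out the first step when $p = 3$, where the supply of tight partitions is thin and the outcome genuinely depends on the parity of $q$ and on the exceptional value $q = 3$. Everything else --- the count of type-\ref{item:eq-b} vertices, the second and third steps of sufficiency, and the necessity of \ref{item:condition-1} --- is routine once this framework is in place.
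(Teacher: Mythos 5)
Your framework coincides with the paper's (reduce to $\cpp_k$ via \Cref{thm:zero-lifting}, classify the tight vertices by \Cref{lem:part}, prove necessity by exhibiting an extra equation valid on the face and sufficiency by showing any valid equation on the face is a multiple of \eqref{eq:q-chored-k-cycle}), but both genuinely hard steps are left as declared gaps rather than proved. For the necessity of \ref{item:condition-2} you say one ``extracts a further equation'' from the list of tight vertices and concede that producing it is the delicate point; that equation \emph{is} the content of this direction. The paper exhibits it explicitly, namely $\sum_{i \in \Z_k} (-1)^i x_{\set{i,i+q+2}} = 0$, and verifies it by showing that a tight partition can connect a $(q+2)$-chord only if it is of type \ref{item:eq-a} with exactly two merged parts $P_1,P_2$ of sizes $p$ and $q+1-p$, and then checking that the at most four connected $(q+2)$-chords split evenly by parity because $k=3q+1$ is even when $q$ is odd. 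Without this, your necessity argument for \ref{item:condition-2} is an assertion, not a proof.

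On the sufficiency side there are two concrete failures. First, your mechanism for killing the coefficients off $C \cup D$ reads off differences between merged and unmerged tight partitions; by \ref{item:eq-a3} merged parts are separated by more than $q$ nodes, so these differences are supported entirely on chords of length greater than $q$ and can never touch the short chords $\set{i,i+p}$ with $2 \le p \le q-1$, which lie inside single parts. Those need a different device --- the paper's four-term alternating combination of pure cycle partitions \eqref{eq:construction-short-chords}, in which everything except $b_{\set{i,i+p}}$ cancels. Moreover, your fallback of ``direct enumeration'' for $k=3q+1$ cannot be uniform in $q$; the paper instead proves the recursion $b_{\set{j,j+p}} = -b_{\set{j-q,j+p-q}}$ via \eqref{eq:construction-n3} and iterates it $k$ times around the cycle, using that $k$ is odd exactly when $q$ is even to force $b_{\set{j,j+p}} = -b_{\set{j,j+p}} = 0$; this is where \ref{item:condition-2} actually does its work in the sufficiency direction, and nothing in your outline replaces it. Second, the partitions in your step for showing $c$ is constant on $C$ do not exist for $q \ge 3$: a type-\ref{item:eq-a} partition has exactly $\lceil k/q\rceil = p+1$ parts of size at most $q$ summing to $k=pq+1$, so it cannot contain both a part $\set{a,a+1}$ and a separate singleton $\set{a+2}$, since the remaining $p-1$ parts would have to cover $pq-2 > (p-1)q$ nodes. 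The correct comparison is between the cycle partitions with block lengths $(2,q-1,q,\dots,q)$ and $(1,q,\dots,q)$, whose difference involves short chords and therefore must come \emph{after} the short-chord coefficients are shown to vanish. Your third step (deriving $c_{\set{t,t+q}}=-\lambda$ from a type-\ref{item:eq-b} vertex and its singleton split) is correct and matches the paper.
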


\begin{proof}
    \textbf{Necessity.}
    We begin by showing the necessity of both conditions.
    More specifically, we show that if any of the conditions is violated then the induced face is not a facet of the polytope.
    To do so, we establish an equality independent of \eqref{eq:q-chored-k-cycle} that is satisfied by all feasible solutions in the face.
    This implies that the face has dimension at most $\binom{k}{2}-2$ and is thus not a facet, as $\cpp_k$ has dimension $\binom{k}{2}$.

    Suppose \ref{item:condition-1} is violated.
    Let $\Pi$ be a partition of $\Z_k$ such that $x=x^\Pi$ satisfies \eqref{eq:q-chored-k-cycle} at equality.
    By \Cref{lem:part}, $\Pi$ satisfies either \ref{item:eq-a} or \ref{item:eq-b}.
    As \ref{item:eq-b} cannot hold by the assumption that $k \neq 1$ mod $q$, \ref{item:eq-a} holds.
    By $|\Pi^{\cyc}| = \lceil k / q \rceil$ and the definition of $\Pi^{\cyc}$, we get
    \[
        \sum_{i \in \Z_k} x_{\{i,i+1\}} = \lceil k / q \rceil \enspace .
    \]
    As this equality holds for all $x$ in the face induced by \eqref{eq:q-chored-k-cycle}, this face cannot be a facet.

    Next, suppose \ref{item:condition-1} holds and \ref{item:condition-2} is violated, i.e.~$q > 3$ odd and $k = 3q+1$.
    We show that in this case the additional equality
    \begin{align}\label{eq:second-condition-equality}
        \sum_{i \in \Z_k} (-1)^i x_{\{i,i+q+2\}} = 0
    \end{align}
    holds for all $x$ in the face induced by \eqref{eq:q-chored-k-cycle}.
    Note that, because $q > 3$, all edges in \eqref{eq:second-condition-equality} are distinct. (For $q = 3$, we have $q + 2 = - q - 2$ (mod $k$), and \eqref{eq:second-condition-equality} reduces to the trivial equality $0 = 0$.)

    Let $\Pi$ be a partition of $\Z_k$ such that $x=x^\Pi$ satisfies \eqref{eq:q-chored-k-cycle} at equality.
    If $x_{\{i,i+q+2\}} = 0$ for all $i \in \Z_k$, then \eqref{eq:second-condition-equality} clearly holds.
    Now, suppose there exists $i \in \Z_k$ with $x_{\{i,i+q+2\}} = 1$.
    Then, \ref{item:eq-b} of \Cref{lem:part} cannot hold by the following argument:
    If \ref{item:eq-b} holds, then we have $|\Pi^{\cyc}| =  \lceil k / q \rceil - 1 = 3$ and $|U| \leq q+1$ for all $U \in \Pi^{\cyc}$.
    By definition of $\Pi^{\cyc}$ and by $|\Pi^{\cyc}| = 3$: $\Pi = \Pi^{\cyc}$.
    By $|U| \leq q+1$ and by definition of $\Pi^{\cyc}$: $\{i,i+q+2\} \not\subseteq U$ for all $U \in \Pi$ and all $i \in \Z_k$.
    This implies $x_{\{i,i+q+2\}} = 0$ for all $i \in \Z_k$, in contradiction to the assumption.
    Therefore, \ref{item:eq-a} of \Cref{lem:part} holds.
    Since $\lceil k / q \rceil = 4$, Conditions \ref{item:eq-a1} and \ref{item:eq-a2} state that $|\Pi^{\cyc}| = 4$ with $|U| \leq q$ for all $U \in \Pi^{\cyc}$.
    And since $\{i,i+q+2\} \not\subseteq U$ for all $U \in \Pi^{\cyc}$, at least two sets in $\Pi^{\cyc}$ must be joined in $\Pi$.
    Due to Condition \ref{item:eq-a3}, this can only be the case if $\Pi^{\cyc} = \{P_1,Q_1,P_2,Q_2\}$, $\Pi = \{P_1 \cup P_2, Q_1, Q_2\}$ with $|P_1| = p$, $|P_2| = q + 1 - p$, and $|Q_1| + |Q_2| = q$ for some $p \in \{1,\dots,\tfrac{q+1}{2}\}$.
    Now, let $i \in \Z_k$ such that $P_1 = \{i,i+1,\dots,i+p-1\}$.
    By construction, $P_2 = \{i+p+q,\dots,i-q-1\}$.
    An illustration can be found in \Cref{fig:necessity-2}.

    If $p = 1$, i.e.~$P_1 = \{i\}$, then $x_{\{i, i+q+2\}} = x_{\{i-q-2, i\}} = 1$ and $x_{\{j, j+q+2\}} = 0$ for all $j \in \Z_k \setminus \{i, i-q-2\}$.
    As $q$ is odd, $k=3q+1$ is even.
    Thus, either $i$ is odd or $i-q-2$ (mod $k$) is odd and, therefore, $x$ satisfies \eqref{eq:second-condition-equality}.

    Now, consider the case $p \geq 2$.
    Let $A = \{i-q-2, i-q-1, i+p-2, i+p-1\}$.
    Then, by construction, $x_{\{j, j+q+2\}} = 1$ for all $j \in A$ and $x_{\{j, j+q+2\}} = 0$ for all $j \in \Z_k \setminus A$.
    By a similar argument as above, exactly two values in $A$ are odd and two values in $A$ are even.
    Therefore, $x$ satisfies \eqref{eq:second-condition-equality}.

    \begin{figure}[t]
        \centering
        \begin{tikzpicture}[baseline=0]
    \def \r{1.7cm}
    \def \rs{0.2*\r}
    \def \l{1.9cm}
    \def \b{0.3cm}
    \def\n{16}
    \pgfmathtruncatemacro{\nn}{\n-1}
    \def \a{360/\n}

    \draw[arealight2] plot [smooth cycle, tension=0.2] coordinates {
		(2*\a:\r) 
        (4*\a:\r)
		(6*\a:\r) 
		(8*\a:\rs)
		(12*\a:\r)
		(0*\a:\rs)
    };

    \foreach \style in {area3, area2}
    {
	    \draw[\style] (2*\a:\r) arc (2*\a:6*\a:\r);
	    \draw[\style] (7*\a:\r) arc (7*\a:11*\a:\r);
	    \draw[\style] (12*\a:\r) arc (12*\a:12.01*\a:\r);
	    \draw[\style] (13*\a:\r) arc (13*\a:17*\a:\r);
	 }

    \foreach \i in {0, 1, 2, 3, 5, 6, 7, 8, 10, 11, 12, 13, 14}
    {
        \node[vertex] (\i) at (\i*\a:\r) {};
    }
    
    \node at (4*\a:\r) {\dots};
    \node[rotate=90+9*\a] at (9*\a:\r) {\dots};
    \node[rotate=90+15*\a] at (15*\a:\r) {\dots};

    \node[label={[shift={(0.3,-0.25)}]\scriptsize $i$}] at (12*\a:\r) {};
    \node[label={[shift={(0,-0.2)}]\scriptsize $i\hspace{-2pt}+\hspace{-2pt}q\hspace{-2pt}+\hspace{-2pt}2$}] at (3*\a:\l) {};
    \node[label={[shift={(0,-0.2)}]\scriptsize $i\hspace{-2pt}-\hspace{-2pt}q\hspace{-2pt}-\hspace{-2pt}2$}] at (5*\a:\l) {};

    \draw [decorate,decoration={brace,amplitude=5pt}]
        (1.9, 0.8) -- (1.9, -1.7) node[midway,xshift=12pt]{\scriptsize $Q_1$};
    \draw [decorate,decoration={brace,amplitude=5pt}]
        ($(6*\a:\l) + (0,0.25) + (4*\a:0.5cm)$) -- ($(2*\a:\l) + (0,0.25) + (4*\a:0.5cm)$) node[midway,yshift=10pt]{\scriptsize $P_2$};
    \draw [decorate,decoration={brace,amplitude=5pt}]
        (-1.9, -1.7) -- (-1.9, 0.8) node[midway,xshift=-12pt]{\scriptsize $Q_2$};
    \draw [decorate,decoration={brace,amplitude=5pt}]
        (12.3*\a:\l) -- (11.7*\a:\l) node[midway,yshift=-10pt]{\scriptsize $P_1$};
        
    \draw (3) -- (12);
    \draw (5) -- (12);
\end{tikzpicture}
        \hspace{1cm}
        \begin{tikzpicture}[baseline=0]
    \def \r{1.7cm}
    \def \rs{0.05*\r}
    \def \l{2cm}
    \def \b{0.3cm}
    \def\n{20}
    \pgfmathtruncatemacro{\nn}{\n-1}
    \def \a{360/\n}

    \draw[arealight2] plot [smooth cycle, tension=0.2] coordinates {
		(3*\a:\r) 
		(7*\a:\r) 
		(10*\a:\rs)
		(13*\a:\r) 
		(17*\a:\r)
		(20*\a:\rs)
    };

    \foreach \style in {area3, area2}
    {
	    \draw[\style] (3*\a:\r) arc (3*\a:7*\a:\r);
	    \draw[\style] (8*\a:\r) arc (8*\a:12*\a:\r);
	    \draw[\style] (13*\a:\r) arc (13*\a:17*\a:\r);
	    \draw[\style] (18*\a:\r) arc (18*\a:22*\a:\r);
	 }

    \foreach \i in {1, 2, 3, 4, 6, 7, 8, 9, 11, 12, 13, 14, 16, 17, 18, 19}
    {
        \node[vertex] (\i) at (\i*\a:\r) {};
    }

    \node[label={[shift={(-0.3,-0.45)}]\scriptsize $i$}] at (13*\a:\r) {};
    \node[label={[shift={(0.6,-0.55)}]\scriptsize$i\hspace{-2pt}+\hspace{-2pt}p\hspace{-2pt}-\hspace{-2pt}1$}] at (17*\a:\r) {};
    \node[label={[shift={(0.25,0.1)}]\scriptsize$i\hspace{-2pt}+\hspace{-2pt}p\hspace{-2pt}+\hspace{-2pt}q$}] at (3*\a:\r) {};
    \node[label={[shift={(-0.25,0.1)}]\scriptsize$i\hspace{-2pt}-\hspace{-2pt}q\hspace{-2pt}-\hspace{-2pt}1$}] at (7*\a:\r) {};

    \node[rotate=90] at (0*\a:\r) {\dots};
    \node[rotate=180] at (5*\a:\r) {\dots};
    \node[rotate=270] at (10*\a:\r) {\dots};
    \node[rotate=0] at (15*\a:\r) {\dots};

    \draw [decorate,decoration={brace,amplitude=5pt}]
        ($(2*\a:\l) + (\b, 0)$) -- ($(-2*\a:\l) + (\b, 0) $)node[midway,xshift=12pt]{\scriptsize $Q_1$};
    \draw [decorate,decoration={brace,amplitude=5pt}]
        ($(7*\a:\l) + (0,0.15) + (0, \b)$) -- ($(3*\a:\l) + (0,0.15) + (0, \b) $)node[midway,yshift=10pt]{\scriptsize $P_2$};
    \draw [decorate,decoration={brace,amplitude=5pt}]
        ($(12*\a:\l) + (-\b, 0)$) -- ($(8*\a:\l) + (-\b, 0) $)node[midway,xshift=-12pt]{\scriptsize $Q_2$};
    \draw [decorate,decoration={brace,amplitude=5pt}]
        ($(17*\a:\l) + (0, -\b)$) -- ($(13*\a:\l) + (0, -\b) $)node[midway,yshift=-10pt]{\scriptsize $P_1$};
        
    \draw (3) -- (16);
    \draw (4) -- (17);
    \draw (6) -- (13);
    \draw (7) -- (14);
\end{tikzpicture}
        \caption{Illustration of the proof that \eqref{eq:second-condition-equality} holds in case $|P_1| = p = 1$ (left) and $|P_1| = p \geq 2$ (right).
        In this and all subsequent figures, nodes are enumerated counterclockwise.
        The black lines depict those $q+2$-chords that are connected in the given partitions.}
        \label{fig:necessity-2}
    \end{figure}

    \textbf{Sufficiency.}
    Next, we show that the conditions are sufficient.
    That is, we prove that \eqref{eq:q-chored-k-cycle} induces a facet of the clique partitioning problem if conditions \ref{item:condition-1} and \ref{item:condition-2} are satisfied.
    The face induced by \eqref{eq:q-chored-k-cycle} is contained in a facet induced by some inequality $b^\top x \leq \beta$, i.e.~$\{ x \in \cpp_k \mid \eqref{eq:q-chored-k-cycle} \text{ satisfied at equality} \} \subseteq \set{x \in \cpp_k \mid b^\top x = \beta}$.
    We prove sufficiency by showing that $b^\top x \leq \beta$ and \eqref{eq:q-chored-k-cycle} differ only by a multiplicative constant $\lambda \in \mathbb{R}$, implying the corresponding faces to coincide and \eqref{eq:q-chored-k-cycle} to be facet-inducing.
    In particular, we show that there exists a $\lambda \in \mathbb{R}$ such that it holds for all $i \in \mathbb{Z}_k$ that $b_\set{i,i+1} = \lambda$, $b_\set{i,i+q} = -\lambda$ and $b_\set{i,i+p} = 0$ for all $p \in \{2,\dots,\lfloor k / 2\rfloor\} \setminus \{q\}$.
    Since there are vectors that satisfy both inequalities at equality, it then follows immediately that also $\beta = \lambda \left(k - \left\lceil \tfrac{k}{q} \right\rceil \right)$.

    To determine the coefficients $b$, we first introduce some notation that allows to concisely specify feasible solutions satisfying \eqref{eq:q-chored-k-cycle} at equality.
    Let $L = \lceil k / q \rceil$.
    An $L$-tuple $a \in \{1,\dots,q\}^{\Z_L}$ is called \emph{$kq$-feasible} if $\sum_{l \in \Z_L} a_l = k$ and $a_l \leq q$ for all $l \in \Z_L$.
    Any $i \in \Z_k$ together with a $kq$-feasible $a$ defines a cycle partition that, starting at node $i$, contains sets whose size is given by the elements of $a$. 
    We denote by $\varphi(i,a) = x^{\Pi^{\cyc}}$ the feasible vector induced by this cycle partition $\Pi^{\cyc} = \{\{i,\dots,i+a_0-1\}, \{i+a_0,\dots,i+a_0+a_1-1\},\dots, \{i-a_{L-1},\dots,i-1\}\}$. 
    Furthermore, we may overline two elements of a \emph{$kq$-feasible} $a$ to indicate that the associated sets in the cycle partition are subsets of the same set in the inducing partition, i.e.~for $a = (a_0, \dots, \overline{a_{j_1}}, \dots, \overline{a_{j_2}}, \dots, a_{L-1})$ we let $\varphi(i,a)$ denote the feasible vector of the partition $\Pi$ obtained from $\Pi^{\cyc}$ by joining the sets associated with $a_{j_1}$ and $a_{j_2}$. 
    By definition of $kq$-feasibility, $\Pi^{\cyc}$ satisfies \ref{item:eq-a} of \Cref{lem:part}, and $\Pi$ satisfies \ref{item:eq-a} of \Cref{lem:part} if there are at least $q$ nodes between the components of the cycle partition corresponding to the overlined elements.
    Thus, $\varphi(i,a)$ satisfies \eqref{eq:q-chored-k-cycle} at equality in this case.
    To give an example of the introduced notation, the first partition in \Cref{fig:cycle-partition} (shaded area) can be expressed by $\varphi(i, (\overline{3},3,\overline{2},3,2))$ and the corresponding cycle partition (darker shade) by $\varphi(i, (3,3,2,3,2))$, where $i$ is the node at the bottom.

    \begin{figure}[t]
        \centering
        \begin{tikzpicture}
    \def \r{1.2cm}
    \def \l{0.85cm}
    \def \s{20}
    \def \d{1}

    \foreach \style in {area3, area2}
    {
        \draw[\style] (-90-3*\s+\d:\r) arc (-90-3*\s+\d:-90-1*\s-\d:\r);
        \draw[\style] (-90+\d:\r) arc (-90+\d:-90+4*\s-\d:\r);
        \draw[\style] (-90+5*\s+\d:\r) arc (-90+5*\s+\d:-90+7*\s-\d:\r);
        \draw[\style] (-90+8*\s+\d:\r) arc (-90+8*\s+\d:-90+10*\s-\d:\r);
        \draw[\style] (-90-6*\s+\d:\r) arc (-90-6*\s+\d:-90-4*\s-\d:\r);
	 }

    \node[rotate=-2.2*\s] at (-90-2*\s:\l) {\scriptsize $q\hspace{-2pt}-\hspace{-2pt}1$};
    \node[rotate=1.8*\s] at (-90+2*\s:\l) {\scriptsize $p\hspace{-2pt}+\hspace{-2pt}1$};
    \node[rotate=180+6*\s] at (-90+6*\s:\l) {\scriptsize $q\hspace{-2pt}-\hspace{-2pt}p\hspace{-2pt}+\hspace{-2pt}1$};
    \node at (-90+9*\s:\l) {\scriptsize $q$};
    \node at (-90-5*\s:\l) {\scriptsize $q$};

    \node[vertex, label={[shift={(0.0,-0.6)}]\scriptsize $i$}] at (-90:\r) {};
    \node[vertex] at (-90+\s:\r) {};
    \node[rotate=2*\s] at (-90+2*\s:\r) {$\cdots$};
    \node[vertex] at (-90+3*\s:\r) {};
    \node[vertex] at (-90+4*\s:\r) {};
    \node[vertex] at (-90+5*\s:\r) {};
    \node[rotate=6*\s] at (-90+6*\s:\r) {$\cdots$};
    \node[vertex] at (-90+7*\s:\r) {};
    \node[vertex] at (-90+8*\s:\r) {};
    \node[rotate=9*\s] at (-90+9*\s:\r) {$\cdots$};
    \node[vertex] at (-90+10*\s:\r) {};
    \node[vertex] at (-90-\s:\r) {};
    \node[rotate=-2*\s] at (-90-2*\s:\r) {$\cdots$};
    \node[vertex] at (-90-3*\s:\r) {};
    \node[vertex] at (-90-4*\s:\r) {};
    \node[rotate=-5*\s] at (-90-5*\s:\r) {$\cdots$};
    \node[vertex] at (-90-6*\s:\r) {};

    \node[rotate=11*\s] at (-90+11*\s:\r) {$\cdots$};
\end{tikzpicture}%
        \hfill
        \begin{tikzpicture}
    \def \r{1.2cm}
    \def \l{0.85cm}
    \def \s{20}
    \def \d{1}

    \foreach \style in {area3, area2}
    {
        \draw[\style] (-90-3*\s+\d:\r) arc (-90-3*\s+\d:-90-\s-\d:\r);
        \draw[\style] (-90+\d:\r) arc (-90+\d:-90+3*\s-\d:\r);
        \draw[\style] (-90+4*\s+\d:\r) arc (-90+4*\s+\d:-90+7*\s-\d:\r);
        \draw[\style] (-90+8*\s+\d:\r) arc (-90+8*\s+\d:-90+10*\s-\d:\r);
        \draw[\style] (-90+12*\s+\d:\r) arc (-90+12*\s+\d:-90+14*\s-\d:\r);
	 }

    \node[rotate=-2*\s] at (-90-2*\s:\l) {\scriptsize $q\hspace{-2pt}-\hspace{-2pt}1$};
    \node at (-90+1.5*\s:\l) {\scriptsize $p$};
    \node[rotate=-3.5*\s] at (-90+5.5*\s:\l) {\scriptsize $q\hspace{-2pt}-\hspace{-2pt}p\hspace{-2pt}+\hspace{-2pt}2$};
    \node at (-90+9*\s:\l) {\scriptsize $q$};
    \node at (-90-5*\s:\l) {\scriptsize $q$};

    \node[vertex, label={[shift={(0.0,-0.6)}]\scriptsize $i$}] at (-90:\r) {};
    \node[vertex] at (-90+\s:\r) {};
    \node[rotate=2*\s] at (-90+2*\s:\r) {$\cdots$};
    \node[vertex] at (-90+3*\s:\r) {};
    \node[vertex] at (-90+4*\s:\r) {};
    \node[vertex] at (-90+5*\s:\r) {};
    \node[rotate=6*\s] at (-90+6*\s:\r) {$\cdots$};
    \node[vertex] at (-90+7*\s:\r) {};
    \node[vertex] at (-90+8*\s:\r) {};
    \node[rotate=9*\s] at (-90+9*\s:\r) {$\cdots$};
    \node[vertex] at (-90+10*\s:\r) {};
    \node[vertex] at (-90-\s:\r) {};
    \node[rotate=-2.2*\s] at (-90-2*\s:\r) {$\cdots$};
    \node[vertex] at (-90-3*\s:\r) {};
    \node[vertex] at (-90-4*\s:\r) {};
    \node[rotate=-5*\s] at (-90-5*\s:\r) {$\cdots$};
    \node[vertex] at (-90-6*\s:\r) {};

    \node[rotate=11*\s] at (-90+11*\s:\r) {$\cdots$};
\end{tikzpicture}%
        \hfill
        \begin{tikzpicture}
    \def \r{1.2cm}
    \def \l{0.85cm}
    \def \s{20}
    \def \d{1}

    \foreach \style in {area3, area2}
    {
        \draw[\style] (-90-3*\s+\d:\r) arc (-90-3*\s+\d:-90-\d:\r);
        \draw[\style] (-90+\s+\d:\r) arc (-90+\s+\d:-90+4*\s-\d:\r);
        \draw[\style] (-90+5*\s+\d:\r) arc (-90+5*\s+\d:-90+7*\s-\d:\r);
        \draw[\style] (-90+8*\s+\d:\r) arc (-90+8*\s+\d:-90+10*\s-\d:\r);
        \draw[\style] (-90-6*\s+\d:\r) arc (-90-6*\s+\d:-90-4*\s-\d:\r);
	 }

    \draw[area2] (-90:\r) arc (-90:-90-3*\s:\r);
    \node at (-90-1.5*\s:\l) {\scriptsize $q$};
    \draw[area2] (-90+\s:\r) arc (-90+\s:-90+4*\s:\r);
    \node at (-90+2.5*\s:\l) {\scriptsize $p$};
    \draw[area2] (-90+5*\s:\r) arc (-90+5*\s:-90+7*\s:\r);
    \node[rotate=-3*\s] at (-90+6*\s:\l) {\scriptsize $q\hspace{-2pt}-\hspace{-2pt}p\hspace{-2pt}+\hspace{-2pt}1$};
    \draw[area2] (-90+8*\s:\r) arc (-90+8*\s:-90+10*\s:\r);
    \node at (-90+9*\s:\l) {\scriptsize $q$};
    \draw[area2] (-90-4*\s:\r) arc (-90-4*\s:-90-6*\s:\r);
    \node at (-90-5*\s:\l) {\scriptsize $q$};

    \node[vertex, label={[shift={(0.0,-0.6)}]\scriptsize $i$}] at (-90:\r) {};
    \node[vertex] at (-90+\s:\r) {};
    \node[rotate=2*\s] at (-90+2*\s:\r) {$\cdots$};
    \node[vertex] at (-90+3*\s:\r) {};
    \node[vertex] at (-90+4*\s:\r) {};
    \node[vertex] at (-90+5*\s:\r) {};
    \node[rotate=6*\s] at (-90+6*\s:\r) {$\cdots$};
    \node[vertex] at (-90+7*\s:\r) {};
    \node[vertex] at (-90+8*\s:\r) {};
    \node[rotate=9*\s] at (-90+9*\s:\r) {$\cdots$};
    \node[vertex] at (-90+10*\s:\r) {};
    \node[vertex] at (-90-\s:\r) {};
    \node[rotate=-2*\s] at (-90-2*\s:\r) {$\cdots$};
    \node[vertex] at (-90-3*\s:\r) {};
    \node[vertex] at (-90-4*\s:\r) {};
    \node[rotate=-5*\s] at (-90-5*\s:\r) {$\cdots$};
    \node[vertex] at (-90-6*\s:\r) {};

    \node[rotate=11*\s] at (-90+11*\s:\r) {$\cdots$};
\end{tikzpicture}%
        \hfill
        \begin{tikzpicture}
    \def \r{1.2cm}
    \def \l{0.85cm}
    \def \s{20}
    \def \d{1}

    \foreach \style in {area3, area2}
    {
        \draw[\style] (-90-3*\s+\d:\r) arc (-90-3*\s+\d:-90-0*\s-\d:\r);
        \draw[\style] (-90+\s+\d:\r) arc (-90+\s+\d:-90+3*\s-\d:\r);
        \draw[\style] (-90+4*\s+\d:\r) arc (-90+4*\s+\d:-90+7*\s-\d:\r);
        \draw[\style] (-90+8*\s+\d:\r) arc (-90+8*\s+\d:-90+10*\s-\d:\r);
        \draw[\style] (-90-6*\s+\d:\r) arc (-90-6*\s+\d:-90-4*\s-\d:\r);
	 }

    \node at (-90-1.5*\s:\l) {\scriptsize $q$};
    \node[rotate=1.8*\s] at (-90+2*\s:\l) {\scriptsize $p\hspace{-2pt}-\hspace{-2pt}1$};
    \node[rotate=180+5.5*\s] at (-90+5.5*\s:\l) {\scriptsize $q\hspace{-2pt}-\hspace{-2pt}p\hspace{-2pt}+\hspace{-2pt}2$};
    \node at (-90+9*\s:\l) {\scriptsize $q$};
    \node at (-90-5*\s:\l) {\scriptsize $q$};

    \node[vertex, label={[shift={(0.0,-0.6)}]\scriptsize $i$}] at (-90:\r) {};
    \node[vertex] at (-90+\s:\r) {};
    \node[rotate=2*\s] at (-90+2*\s:\r) {$\cdots$};
    \node[vertex] at (-90+3*\s:\r) {};
    \node[vertex] at (-90+4*\s:\r) {};
    \node[vertex] at (-90+5*\s:\r) {};
    \node[rotate=6*\s] at (-90+6*\s:\r) {$\cdots$};
    \node[vertex] at (-90+7*\s:\r) {};
    \node[vertex] at (-90+8*\s:\r) {};
    \node[rotate=9*\s] at (-90+9*\s:\r) {$\cdots$};
    \node[vertex] at (-90+10*\s:\r) {};
    \node[vertex] at (-90-\s:\r) {};
    \node[rotate=-2*\s] at (-90-2*\s:\r) {$\cdots$};
    \node[vertex] at (-90-3*\s:\r) {};
    \node[vertex] at (-90-4*\s:\r) {};
    \node[rotate=-5*\s] at (-90-5*\s:\r) {$\cdots$};
    \node[vertex] at (-90-6*\s:\r) {};

    \node[rotate=11*\s] at (-90+11*\s:\r) {$\cdots$};
\end{tikzpicture}%
        \caption{Depicted are four partitions (with their corresponding cycle partitions) that induce the feasible solutions on the right-hand side of \eqref{eq:construction-short-chords}. 
        In this and all subsequent figures, the size of individual components of cycle partitions is indicated by numbers within the circle.
        To verify that \eqref{eq:construction-short-chords} holds, observe that the edge $\{i, i+p\}$ is only connected in the first partition while all other edges are connected in either two or all four partitions.
        }
        \label{fig:short-chord-unit-vectors}
    \end{figure}

    \textbf{Coefficients $\bm{b_\set{i,i+p} = 0}$ for $\bm{p \in \set{2,\dots,q-1} }$ and $\bm{i \in \Z_k}$:}
    Let $i \in \Z_k$ and let $p \in \{2,\dots,q-1\}$.
    We can construct $b_{\{i,i+p\}}$ using a combination of feasible solutions defined by $kq$-feasible $L$-tuples:
    \begin{equation}
    \begin{alignedat}{5} \label{eq:construction-short-chords}
        b_{\{i,i+p\}} 
        & = b^\top \varphi(i,\     &&(p+1, \ &&q-p+1, \ q, \dots, q, \ &&q-1&&))            \\
        & - b^\top \varphi(i,\     &&(p, \ &&q-p+2, \ q, \dots, q, \ &&q-1&&))              \\
        & - b^\top \varphi(i + 1,\ &&(p, \ &&q-p+1, \ q, \dots, q, \ &&q&&))                \\
        & + b^\top \varphi(i + 1,\ &&(p-1, \ &&q-p+2, \ q, \dots, q, \ &&q&&))\enspace .
    \end{alignedat}
    \end{equation}
    By the discussion above, these feasible solutions satisfy \eqref{eq:q-chored-k-cycle}, and thus also $b^\top x \leq \beta$, at equality.
    Consequently, \eqref{eq:construction-short-chords} equals $\beta - \beta - \beta + \beta = 0$, yielding the desired result of $b_{\{i,i+p\}} = 0$.
    The feasible solutions used in this construction are illustrated in \Cref{fig:short-chord-unit-vectors}.
    
    \textbf{Coefficients $\bm{b_{\{i,i+1\}} = \lambda}$ and $\bm{b_{\{i,i+q\}} = -\lambda}$ for $\bm{i \in \Z_k}$:}    
    To show that there exists a $\lambda \in \mathbb{R}$ such that $b_{\{i,i+1\}} = \lambda$ and $b_{\{i,i+q\}} = -\lambda$ for all $i \in \Z_k$, we first show $b_{\{i,i+1\}} = b_{\{i+1,i+2\}}$, and then $b_{\{i,i+1\}} = -b_{\{i,i+q\}}$.
    The feasible solutions used for deriving these equalities are illustrated in \Cref{fig:edges-and-q-chords}.
    Note that $b_{\{i,i+p\}} = 0$ for $p \in \{2,\dots,q-1\}$ by the previous case, and these coefficients must not be considered in the following.
    
    It holds
    \begin{align*}
        b_{\{i,i+1\}} - b_{\{i+1,i+2\}} = b^\top \varphi(i, (2, q-1, q,\dots,q)) - b^\top \varphi(i, (1, q, q,\dots,q))
    \end{align*}
    where the right-hand side evaluates to $\beta - \beta = 0$ since both feasible solutions satisfy $b^\top x \leq \beta$ at equality, yielding $b_{\{i,i+1\}} = b_{\{i+1,i+2\}}$.

    Let now
    \begin{align}\label{eq:q-chord-partition}
        \Pi = \{\{i,\dots,i+q\}, \{i+q+1,\dots,i+2q\},\dots, \{i-q,\dots,i-1\}\} 
    \end{align} 
    be the partition where $i$ is the first node of a component of size $q+1$ and all other components have size $q$.
    It holds
    \begin{align*}\label{eq:edge-q-chord-pairs}
        b_{\{i,i+1\}} + b_{\{i,i+q\}} = b^\top x^\Pi - b^\top \varphi(i, (1, q, \dots, q)) \enspace .
    \end{align*}
    Again, $\varphi(i, (1, q, \dots, q))$ satisfies $b^\top x \leq \beta$ at equality.
    Further, $\Pi$ satisfies \ref{item:eq-b} of \Cref{lem:part}, and thus also \eqref{eq:q-chored-k-cycle} and $b^\top x \leq \beta$ at equality.
    Consequently, the right-hand side above evaluates to $\beta - \beta = 0$, yielding $b_{\{i,i+1\}} = - b_{\{i,i+q\}}$.

    \begin{figure}[t]
        \centering
        \begin{tikzpicture}
    \def \r{0.9cm}
    \def \l{0.5cm}
    \def \s{30}

    \foreach \style in {area3, area2}
    {
        \draw[\style] (-90:\r) arc (-90:-90+1*\s:\r);
        \draw[\style] (-90+2*\s:\r) arc (-90+2*\s:-90+4*\s:\r);
        \draw[\style] (-90+5*\s:\r) arc (-90+5*\s:-90+7*\s:\r);
        \draw[\style] (-90+9*\s:\r) arc (-90+9*\s:-90+11*\s:\r);
	 }

    \node[rotate=2.8*\s] at (-90+3*\s:\l) {\scriptsize $q\hspace{-2pt}-\hspace{-2pt}1$};
    \node at (-90+6*\s:\l) {\scriptsize $q$};
    \node at (-90+10*\s:\l) {\scriptsize $q$};
    
    \node[vertex, label={[shift={(0.0,-0.6)}]\scriptsize $i$}] at (-90:\r) {};
    \node[vertex] at (-90+\s:\r) {};
    \node[vertex] at (-90+2*\s:\r) {};
    \node[rotate=3*\s] at (-90+3*\s:\r) {$\cdots$};
    \node[vertex] at (-90+4*\s:\r) {};
    \node[vertex] at (-90+5*\s:\r) {};
    \node[rotate=6*\s] at (-90+6*\s:\r) {$\cdots$};
    \node[vertex] at (-90+7*\s:\r) {};
    \node[rotate=8*\s] at (-90+8*\s:\r) {$\cdots$};
    \node[vertex] at (-90+9*\s:\r) {};
    \node[rotate=10*\s] at (-90+10*\s:\r) {$\cdots$};
    \node[vertex] at (-90+11*\s:\r) {};
\end{tikzpicture}%
        \hfill
        \begin{tikzpicture}
    \def \r{0.9cm}
    \def \l{0.5cm}
    \def \s{30}

    \foreach \style in {area3, area2}
    {
        \draw[\style] (-90:\r) arc (-90:-90:\r);
        \draw[\style] (-90+1*\s:\r) arc (-90+1*\s:-90+4*\s:\r);
        \draw[\style] (-90+5*\s:\r) arc (-90+5*\s:-90+7*\s:\r);
        \draw[\style] (-90+9*\s:\r) arc (-90+9*\s:-90+11*\s:\r);
	 }

    \node at (-90+2.5*\s:\l) {\scriptsize $q$};
    \node at (-90+6*\s:\l) {\scriptsize $q$};
    \node at (-90+10*\s:\l) {\scriptsize $q$};
    
    \node[vertex, label={[shift={(0.0,-0.6)}]\scriptsize $i$}] at (-90:\r) {};
    \node[vertex] at (-90+\s:\r) {};
    \node[vertex] at (-90+2*\s:\r) {};
    \node[rotate=3*\s] at (-90+3*\s:\r) {$\cdots$};
    \node[vertex] at (-90+4*\s:\r) {};
    \node[vertex] at (-90+5*\s:\r) {};
    \node[rotate=6*\s] at (-90+6*\s:\r) {$\cdots$};
    \node[vertex] at (-90+7*\s:\r) {};
    \node[rotate=8*\s] at (-90+8*\s:\r) {$\cdots$};
    \node[vertex] at (-90+9*\s:\r) {};
    \node[rotate=10*\s] at (-90+10*\s:\r) {$\cdots$};
    \node[vertex] at (-90+11*\s:\r) {};
\end{tikzpicture}%
        \hfill
        \begin{tikzpicture}
    \def \r{0.9cm}
    \def \l{0.5cm}
    \def \s{30}

    \foreach \style in {area3, area2}
    {
        \draw[\style] (-90:\r) arc (-90:-90+4*\s:\r);
        \draw[\style] (-90+5*\s:\r) arc (-90+5*\s:-90+7*\s:\r);
        \draw[\style] (-90+9*\s:\r) arc (-90+9*\s:-90+11*\s:\r);
	 }

    \node[rotate=1.8*\s] at (-90+2*\s:\l) {\scriptsize $q\hspace{-2pt}+\hspace{-2pt}1$};
    \node at (-90+6*\s:\l) {\scriptsize $q$};
    \node at (-90+10*\s:\l) {\scriptsize $q$};

    \node[vertex, label={[shift={(0.0,-0.6)}]\scriptsize $i$}] at (-90:\r) {};
    \node[vertex] at (-90+\s:\r) {};
    \node[vertex] at (-90+2*\s:\r) {};
    \node[rotate=3*\s] at (-90+3*\s:\r) {$\cdots$};
    \node[vertex] at (-90+4*\s:\r) {};
    \node[vertex] at (-90+5*\s:\r) {};
    \node[rotate=6*\s] at (-90+6*\s:\r) {$\cdots$};
    \node[vertex] at (-90+7*\s:\r) {};
    \node[rotate=8*\s] at (-90+8*\s:\r) {$\cdots$};
    \node[vertex] at (-90+9*\s:\r) {};
    \node[rotate=10*\s] at (-90+10*\s:\r) {$\cdots$};
    \node[vertex] at (-90+11*\s:\r) {};
\end{tikzpicture}%
        \hfill
        \begin{tikzpicture}
    \def \r{0.9cm}
    \def \l{0.6cm}
    \def \s{30}

    \node[vertex, label={[shift={(0.0,-0.6)}]\scriptsize $i$}] (i) at (-90:\r) {};
    \node[vertex] (i+1) at (-90+\s:\r) {};
    \node[vertex] (i+2) at (-90+2*\s:\r) {};
    \node[rotate=3*\s] at (-90+3*\s:\r) {$\cdots$};
    \node[vertex] at (-90+4*\s:\r) {};
    \node[vertex] at (-90+5*\s:\r) {};
    \node[rotate=6*\s] at (-90+6*\s:\r) {$\cdots$};
    \node[vertex] at (-90+7*\s:\r) {};
    \node[rotate=8*\s] at (-90+8*\s:\r) {$\cdots$};
    \node[vertex] at (-90+9*\s:\r) {};
    \node[rotate=10*\s] at (-90+10*\s:\r) {$\cdots$};
    \node[vertex] at (-90+11*\s:\r) {};

    \draw (i) -- (i+1);
    \draw (i+1) -- (i+2);
\end{tikzpicture}%
        \hfill
        \begin{tikzpicture}
    \def \r{0.9cm}
    \def \l{0.6cm}
    \def \s{30}

    \node[vertex, label={[shift={(0.0,-0.6)}]\scriptsize $i$}] (i) at (-90:\r) {};
    \node[vertex] (i+1) at (-90+\s:\r) {};
    \node[vertex] at (-90+2*\s:\r) {};
    \node[rotate=3*\s] at (-90+3*\s:\r) {$\cdots$};
    \node[vertex] (i+q) at (-90+4*\s:\r) {};
    \node[vertex] at (-90+5*\s:\r) {};
    \node[rotate=6*\s] at (-90+6*\s:\r) {$\cdots$};
    \node[vertex] at (-90+7*\s:\r) {};
    \node[rotate=8*\s] at (-90+8*\s:\r) {$\cdots$};
    \node[vertex] at (-90+9*\s:\r) {};
    \node[rotate=10*\s] at (-90+10*\s:\r) {$\cdots$};
    \node[vertex] at (-90+11*\s:\r) {};

    \draw (i) -- (i+1);
    \draw (i) -- (i+q);
\end{tikzpicture}%
        \caption{
            Depicted from left to right are: The partitions associated with the feasible solutions $\varphi(i, (2, q-1, q, \dots, q))$ and $\varphi(i, (1, q, \dots, q))$, the partition $\Pi$ of \eqref{eq:q-chord-partition}, the edges with coefficients $b_{\{i,i+1\}}$, $b_{\{i+1,i+2\}}$, and with coefficients $b_{\{i,i+1\}}$, $b_{\{i,i+q\}}$.
        }
        \label{fig:edges-and-q-chords}
    \end{figure}

    \textbf{Coefficients $\bm{b_{\{i,i+p\}} = 0}$ for $\bm{p \in \set{q+1,\dots,\lfloor \tfrac{k}{2} \rfloor} }$ and $\bm{i \in \Z_k}$:}
    In the following we will frequently consider the difference of the feasible solutions induced by a partition and its corresponding cycle partition.
    The resulting vector then only contains non-zero entries for chords longer than $q$.
    We introduce the following notation for this construction:
    For any $kq$-feasible $L$-tuple $a = (a_0, \dots, \overline{a_{j_1}}, \dots, \overline{a_{j_2}}, \dots, a_{L-1})$ and any $i \in \Z_k$, define $\psi(i, a) = \varphi(i, a) - \varphi(i, (a_0, \dots, a_{L-1}))$.
    Note that $b^\top \psi(i, a) = \beta - \beta = 0$.

    By Condition \ref{item:condition-1}, $k = mq+1$ for some $m \in \mathbb{N}$. 
    We make a case distinction on $m$, where the case $m<3$ does not need to be considered as $\{q+1,\dots,\lfloor \tfrac{k}{2} \rfloor\}$ would be empty.

    To begin with, let $m=3$, i.e.~$k = 3q+1$.
    The case of $q=3$ is easily verified as facet-inducing.
    Thus, let $q > 3$. Now, by Condition \ref{item:condition-2}, $q$ must be even.
    We show that in this case $b_\set{i,i+p} = 0$ for $i \in \Z_k$ by induction over $p = q+1,\dots,\lfloor \tfrac{k}{2} \rfloor$.
    In particular, we show that $b_\set{j,j+p} = -b_\set{j-q, j+p-q}$ for all $j \in \Z_k$.
    Applying this equality $k$ times to substitute the right-hand side yields $b_\set{j,j+p} = (-1)^k b_\set{j-kq, j+p-kq}$.
    Taking addition modulo $k$ into account, this simplifies to $b_\set{j,j+p} = (-1)^k b_\set{j, j+p}$.
    Finally, as $q$ is even, $k$ must be odd, implying $b_\set{j,j+p} = -b_\set{j,j+p}$, and thus the desired result of $b_\set{j,j+p} = 0$.
    
    Assume that $b_\set{j,j+p'} = 0$ for all $j \in \Z_k$, $p' \in \{q+1,\dots,p-1\}$. 
    In particular, this assumption holds for the induction start $p = q+1$.
    It holds
    \begin{equation}
    \begin{alignedat}{5} \label{eq:construction-n3}
        \sum_{p'=q+1}^{p} \rlap{$\left(b_\set{j+p-p',j+p} + b_\set{j+p-q-p',j+p-q}\right)$} \\
        &= b^\top \psi(j+(p-q),\ &&(\overline{1},\ &&q,\ &&\overline{q},\ &&q))\\ 
        &+ b^\top \psi(j,\ &&(\overline{(p-q)},\ &&q,\ &&\overline{q+1-(p-q)},\ &&q)) \\ 
        &- b^\top \psi(j,\ &&(\overline{(p-q)+1},\ &&q,\ &&\overline{q-(p-q)},\ &&q)) \enspace ,
    \end{alignedat}
    \end{equation}
    as illustrated in \Cref{fig:n3-unit-vectors}.
    Since all coefficients in the sum for $p'=q+1,\dots,p-1$ are $0$ by the induction hypothesis, and the right-hand side evaluates to $0$ by the discussion above, the equation simplifies to $b_\set{j,j+p} + b_\set{j-q, j+p-q} = 0$ which implies the desired $b_\set{j,j+p} = - b_\set{j-q, j+p-q}$.
    This completes the induction and thus the case of $k = 3q+1$.

    For finishing our case distinction, it remains to regard the case of $k = mq+1$ with $m>3$.
    Let $i \in \Z_k$ and $s \in \set{1,\dots,m-3}$. 
    We firstly show $b_\set{i,i+sq+j} = 0$ for $j \in \set{2, \dots, q-1}$, then $b_\set{i,i+sq+q} = 0$, and lastly $b_\set{i,i+sq+1} = 0$.
    By doing so, we are showing that all coefficients $b_\set{i,i+l}$ with $l \in \{q+1,\dots,k-2q-1\}$ are $0$ which, as $m>3$, correspond exactly to the ones of the $p$-chords with $p \in \{q+1,\dots,\lfloor k/2\rfloor\}$.
    Note that, due to the addition modulo $k$, we are in fact discussing some coefficients $b_\set{i, i+l}$ twice, but avoid further case distinctions.

    \begin{figure}[t]
        \centering
        \hspace{0.2cm}
        \begin{tikzpicture}
    \def \r{1.25cm}
    \def \rs{0.05*\r}
    \def \l{0.9cm}
    \pgfmathtruncatemacro{\n}{17}
    \def \s{360/\n}
    \def \d{1}

    \draw[arealight2] plot [smooth cycle, tension=0.2] coordinates {
		(-90:\r)
        (-90:\r)  
		(-90+4*\s:\rs) 
		(-90+5*\s+\d:\r)
        (-90+6.5*\s:\r)  
		(-90+10*\s-\d:\r) 
		(-90+0*\s:\rs) 
    };

    \foreach \style in {area3, area2}
    {
        \draw[\style] (-90:\r) arc (-90:-90:\r);
        \draw[\style] (-90+1*\s+\d:\r) arc (-90+1*\s+\d:-90+4*\s-\d:\r);
        \draw[\style] (-90+5*\s+\d:\r) arc (-90+5*\s+\d:-90+10*\s-\d:\r);
        \draw[\style] (-90+11*\s+\d:\r) arc (-90+11*\s+\d:-90+16*\s-\d:\r);
	 }

    \node at (-90+2.5*\s:\l) {\scriptsize $q$};
    \node at (-90+7.5*\s:\l) {\scriptsize $q$};
    \node at (-90+13.5*\s:\l) {\scriptsize $q$};

    \node[vertex, label={[shift={(0.0,-0.75)}]\scriptsize $j\hspace{-2pt}+\hspace{-2pt}p\hspace{-2pt}-\hspace{-2pt}q$}] (j2) at (-90:\r) {};
    \node[vertex] at (-90+1*\s:\r) {};
    \node[rotate=2*\s] at (-90+2*\s:\r) {\tiny $\cdots$};
    \node[vertex] at (-90+3*\s:\r) {};
    \node[vertex, label={[shift={(0.55,-0.25)}]\scriptsize $j\hspace{-2pt}+\hspace{-2pt}p$}] (a4) at (-90+4*\s:\r) {};

    \node[vertex] at (-90+5*\s:\r) {};
    \node[rotate=6*\s] at (-90+6*\s:\r) {\tiny$\cdots$};
    \node[vertex] at (-90+7*\s:\r) {};
    \node[vertex,label={[shift={(0.0,0.1)}]\scriptsize $j\hspace{-2pt}-\hspace{-2pt}q$}] (a3) at (-90+8*\s:\r) {};

    \node[rotate=9*\s] (a2) at (-90+9*\s:\r) {\tiny$\cdots$};
    \node (a2-1) at (-90+9.2*\s:\r) {};
    \node (a2-2) at (-90+8.95*\s:\r) {};
    \node (a2-3) at (-90+8.7*\s:\r) {};

    \node[vertex] (a1) at (-90+10*\s:\r) {};

    \node[vertex] at (-90+11*\s:\r) {};
    \node[rotate=12*\s] at (-90+12*\s:\r) {\tiny$\cdots$};
    \node[vertex] at (-90+13*\s:\r) {};

    \node[vertex, label={[shift={(-0.3,-0.25)}]\scriptsize $j$}] (j1) at (-90+14*\s:\r) {};
    \node[rotate=15*\s] at (-90+15*\s:\r) {\tiny$\cdots$};
    \node[vertex] at (-90+16*\s:\r) {};
\end{tikzpicture}%
        \hfill
        \begin{tikzpicture}
    \def \r{1.25cm}
    \def \rs{0.05*\r}
    \def \l{0.9cm}
    \pgfmathtruncatemacro{\n}{17}
    \def \s{360/\n}
    \def \d{1}

    \draw[arealight2] plot [smooth cycle, tension=0.2] coordinates {
		(-90+14*\s+\d:\r)
        (-90+16*\s-\d:\r)  
		(-90+4*\s:\rs) 
		(-90+4*\s+\d:\r) 
		(-90+7*\s-\d:\r) 
		(-90+0*\s:\rs) 
    };

    \foreach \style in {area3, area2}
    {
        \draw[\style] (-90+\d:\r) arc (-90+\d:-90+3*\s-\d:\r);
        \draw[\style] (-90+4*\s+\d:\r) arc (-90+4*\s+\d:-90+7*\s-\d:\r);
        \draw[\style] (-90+8*\s+\d:\r) arc (-90+8*\s+\d:-90+13*\s-\d:\r);
        \draw[\style] (-90+14*\s+\d:\r) arc (-90+14*\s+\d:-90+16*\s-\d:\r);
	 }

    \node at (-90+1.5*\s:\l) {\scriptsize $q$};
    \node at (-90+10.5*\s:\l) {\scriptsize $q$};

    \node[vertex, label={[shift={(0.0,-0.75)}]\scriptsize $j\hspace{-2pt}+\hspace{-2pt}p\hspace{-2pt}-\hspace{-2pt}q$}] at (-90:\r) {};
    \node[vertex] at (-90+1*\s:\r) {};
    \node[rotate=2*\s] at (-90+2*\s:\r) {\tiny $\cdots$};
    \node[vertex] at (-90+3*\s:\r) {};
    \node[vertex, label={[shift={(0.55,-0.25)}]\scriptsize $j\hspace{-2pt}+\hspace{-2pt}p$}] at (-90+4*\s:\r) {};

    \node[vertex] at (-90+5*\s:\r) {};
    \node[rotate=6*\s] at (-90+6*\s:\r) {\tiny$\cdots$};
    \node[vertex] at (-90+7*\s:\r) {}; 
    \node[vertex,label={[shift={(0.0,0.1)}]\scriptsize $j\hspace{-2pt}-\hspace{-2pt}q$}] at (-90+8*\s:\r) {};
    \node[rotate=9*\s] at (-90+9*\s:\r) {\tiny$\cdots$};
    \node[vertex] at (-90+10*\s:\r) {};

    \node[vertex] at (-90+11*\s:\r) {};
    \node[rotate=12*\s] at (-90+12*\s:\r) {\tiny$\cdots$};
    \node[vertex] at (-90+13*\s:\r) {};

    \node[vertex, label={[shift={(-0.3,-0.25)}]\scriptsize $j$}] at (-90+14*\s:\r) {};
    \node[rotate=15*\s] at (-90+15*\s:\r) {\tiny$\cdots$};
    \node[vertex] at (-90+16*\s:\r) {};
\end{tikzpicture}%
        \hfill
        \begin{tikzpicture}
    \def \r{1.25cm}
    \def \rs{0.05*\r}
    \def \l{0.9cm}
    \pgfmathtruncatemacro{\n}{17}
    \def \s{360/\n}
    \def \d{1}

    \draw[arealight2] plot [smooth cycle, tension=0.2] coordinates {
		(-90+14*\s+\d:\r)
        (-90+17*\s-\d:\r)  
		(-90+4*\s:\rs) 
		(-90+5*\s+\d:\r) 
		(-90+7*\s-\d:\r) 
		(-90+13*\s:\rs) 
    };

    \foreach \style in {area3, area2}
    {
        \draw[\style] (-90+1*\s+\d:\r) arc (-90+1*\s+\d:-90+4*\s-\d:\r);
        \draw[\style] (-90+5*\s+\d:\r) arc (-90+5*\s+\d:-90+7*\s-\d:\r);
        \draw[\style] (-90+8*\s+\d:\r) arc (-90+8*\s+\d:-90+13*\s-\d:\r);
        \draw[\style] (-90+14*\s+\d:\r) arc (-90+14*\s+\d:-90+17*\s-\d:\r);
	}

    \node at (-90+2.5*\s:\l) {\scriptsize $q$};
    \node at (-90+10.5*\s:\l) {\scriptsize $q$};

    \node[vertex, label={[shift={(0.0,-0.75)}]\scriptsize $j\hspace{-2pt}+\hspace{-2pt}p\hspace{-2pt}-\hspace{-2pt}q$}] at (-90:\r) {};
    \node[vertex] at (-90+1*\s:\r) {};
    \node[rotate=2*\s] at (-90+2*\s:\r) {\tiny $\cdots$};
    \node[vertex] at (-90+3*\s:\r) {};
    \node[vertex, label={[shift={(0.55,-0.25)}]\scriptsize $j\hspace{-2pt}+\hspace{-2pt}p$}] at (-90+4*\s:\r) {};

    \node[vertex] at (-90+5*\s:\r) {};
    \node[rotate=6*\s] at (-90+6*\s:\r) {\tiny$\cdots$};
    \node[vertex] at (-90+7*\s:\r) {}; 
    \node[vertex,label={[shift={(0.0,0.1)}]\scriptsize $j\hspace{-2pt}-\hspace{-2pt}q$}] at (-90+8*\s:\r) {};
    \node[rotate=9*\s] at (-90+9*\s:\r) {\tiny$\cdots$};
    \node[vertex] at (-90+10*\s:\r) {};

    \node[vertex] at (-90+11*\s:\r) {};
    \node[rotate=12*\s] at (-90+12*\s:\r) {\tiny$\cdots$};
    \node[vertex] at (-90+13*\s:\r) {};

    \node[vertex, label={[shift={(-0.3,-0.25)}]\scriptsize $j$}] at (-90+14*\s:\r) {};
    \node[rotate=15*\s] at (-90+15*\s:\r) {\tiny$\cdots$};
    \node[vertex] at (-90+16*\s:\r) {};

\end{tikzpicture}%
        \caption{
            Depicted are three partitions with their induced cycle partitions.
            The three vectors $\psi$ on the right-hand side of \eqref{eq:construction-n3} are given by the difference of the feasible solutions induced by the partitions and the cycle partitions.
        }
        \label{fig:n3-unit-vectors}
    \end{figure}
    
    For any $j \in \set{2, \dots, q-1}$ the coefficient $b_\set{i,i+sq+j}$ can be constructed by the following combination illustrated in \Cref{fig:long-chord-unit-vectors}:
    \begin{equation}
    \begin{alignedat}{5} \label{eq:construction-long-chords}
        b_\set{i,i+sq+j}
        &= b^\top \psi(i-q+1,\ (\overline{q}, \ &&q, \dots, q, \ j-1, \ &&\overline{q-j+2},&&\ \ \,q, \dots, q&&)) \\
        &- b^\top \psi(i-q+1,\ (\overline{q-1},\ &&q, \dots, q, \ j,\ &&\overline{q-j+2},&&\ \ \, q, \dots, q&&)) \\ 
        &- b^\top \psi(i-q+1,\ (\overline{q},\ &&q, \dots, q, \ j,\ &&\overline{q-j+1},&&\ \ \, q, \dots, q&&)) \\
        &+ b^\top \psi(i-q+1,\ (\overline{q-1},\ &&\underbrace{q, \dots, q}_{\text{$s$ times}}, \ j+1,\ &&\overline{q-j+1},&&\underbrace{q, \dots, q}_{\text{$m{-}s{-}2$ times}}&&))\enspace .
    \end{alignedat}
    \end{equation}
    \\
    As the right-hand side evaluates to $0$, it follows $b_\set{i,i+sq+j} = 0$.
    Using the same construction while fixing $j=2$ and exchanging the position of component $s+2$ and $s+3$, $b_\set{i,i+sq+q}$ can be constructed: 
    \begin{alignat*}{5}
        b_\set{i,i+sq+q} 
        &= b^\top \psi(i-q+1,\ (\overline{q}, \ &&q, \dots, q,\ \overline{q},\ &&1, &&\ \ \, q, \dots, q&&)) \\
        &- b^\top \psi(i-q+1,\ (\overline{q-1},\ &&q, \dots, q,\ \overline{q},\ &&2, &&\ \ \, q, \dots, q&&)) \\ 
        &- b^\top \psi(i-q+1,\ (\overline{q},\ &&q, \dots, q,\ \overline{q-1},\ &&2, &&\ \ \, q, \dots, q&&)) \\
        &+ b^\top \psi(i-q+1,\ (\overline{q-1},\ &&\underbrace{q, \dots, q}_{\text{$s$ times}},\ \overline{q-1},\ &&3, &&\underbrace{q, \dots, q}_{\text{$m{-}s{-}2$ times}}&&)) \enspace .
    \end{alignat*}
    As the right-hand side evaluates again to $0$, it follows $b_\set{i,i+sq+q} = 0$.
    Lastly, it holds
    \begin{align*}
        \sum_{j=1}^{q} b_\set{i,i+sq+j}
        =  b^\top \psi(i,\ (\overline{1},\ \underbrace{q, \dots, q}_{\text{$s$ times}},\ \overline{q},\underbrace{q, \dots, q}_{\text{$m{-}s{-}2$ times}})) \enspace .
    \end{align*}
    Since all coefficients in the sum for $j = 2,\dots,q$ are $0$, and the right-hand side equals $0$ as well, it follows $b_\set{i,i+sq+1} = 0$.
    This determines the last class of coefficients and completes the proof.  
\end{proof}

\begin{remark}
    We observe that the $q$-chorded $k$-cycle inequalities with $k = 1$ mod $q$ appear in pairs.
    More specifically, let $p$ such that $k = pq + 1$.
    Then the $q$-chords $\{i,i+q\}$ for $i \in \Z_k$ form a cycle of length $k$, and the edges $\{i,i+1\}$ for $i \in \Z_k$ are precisely the $p$-chords of that cycle.
    Therefore, we obtain the following inequalities
    \[
        - k + q + 1 \leq \sum_{i \in \Z_k} \left(x_{\{i,i+1\}} - x_{\{i,i+q\}} \right) \leq k - p - 1\enspace ,
    \]
    where the first inequality is a $p$-chorded $k$-cycle inequality with respect to the cycle defined by the edges $\{i,i+q\}$ for $i \in \Z_k$ and the second inequality is just the regular $q$-chorded $k$-cycle inequality \eqref{eq:q-chored-k-cycle}.
    This generalizes the observation made in \cite{andres2023polyhedral} that for odd $k$, the $2$- and $\tfrac{k-1}{2}$-chorded cycle inequalities appear in pairs.
\end{remark}

\begin{figure}[t]
    \centering
    \begin{tikzpicture}
    \def \r{1.25cm}
    \def \rs{0.05*\r}
    \def \l{0.9cm}
    \def \s{16.3636}
    \def \d{1}

    \draw[arealight2] plot [smooth cycle, tension=0.2] coordinates {
		(-90-3*\s+\d:\r) 
		(-90+0*\s-\d:\r) 
		(-90+5*\s:\rs) 
		(-90+10*\s+2*\d:\r) 
		(-90+13*\s-\d:\r) 
		(-90+0*\s:\rs) 
    };

    \draw[area3] (-90+10*\s+\d:\r) arc (-90+10*\s+\d:-90+13*\s-\d:\r);
    \draw[area3] (-90-\d:\r) arc (-90-\d:-90-3*\s+\d:\r);

    \foreach \style in {area2}
    {
        \draw[\style] (-90-\d:\r) arc (-90-\d:-90-3*\s+\d:\r);
        \draw[\style] (-90+\s+\d:\r) arc (-90+\s+\d:-90+2*\s:\r);
        \draw[\style] (-90+4*\s:\r) arc (-90+4*\s:-90+6*\s:\r);
        \draw[\style] (-90+7*\s:\r) arc (-90+7*\s:-90+9*\s-\d:\r);
        \draw[\style] (-90+10*\s+\d:\r) arc (-90+10*\s+\d:-90+13*\s-\d:\r);
        \draw[\style] (-90+14*\s+\d:\r) arc (-90+14*\s+\d:-90+15*\s:\r);
        \draw[\style] (-90+17*\s:\r) arc (-90+17*\s:-90+18*\s-\d:\r);
	 }

    \node at (-90-1.5*\s:\l) {\scriptsize $q$};
    \node at (-90+1.5*\s:\l) {\scriptsize $q$};
    \node at (-90+5*\s:\l) {\scriptsize $q$};
    \node[rotate=180+8*\s] at (-90+8*\s:\l) {\scriptsize $j\hspace{-2pt}-\hspace{-2pt}1$};
    \node[rotate=180+11.3*\s] at (-90+11.5*\s:\l) {\scriptsize $q\hspace{-2pt}-\hspace{-2pt}j\hspace{-2pt}+\hspace{-2pt}2$};
    \node at (-90+14.5*\s:\l) {\scriptsize $q$};
    \node at (-90+17.5*\s:\l) {\scriptsize $q$};

    \node[vertex, label={[shift={(0.0,-0.6)}]\scriptsize $i$}] at (-90:\r) {};
    \node[vertex] at (-90+\s:\r) {};
    \node[rotate=2*\s] at (-90+2*\s:\r) {\tiny $\cdots$};
    \node[rotate=3*\s] at (-90+3*\s:\r) {\tiny$\cdots$};
    \node[rotate=4*\s] at (-90+4*\s:\r) {\tiny$\cdots$};
    \node[vertex] at (-90+5*\s:\r) {};
    \node[vertex] at (-90+6*\s:\r) {};
    \node[vertex] at (-90+7*\s:\r) {};
    \node[rotate=8*\s] at (-90+8*\s:\r) {\tiny$\cdots$};
    \node[vertex] at (-90+9*\s:\r) {};
    \node[vertex,label={[shift={(0.0,0.1)}]\scriptsize $i\hspace{-2pt}+\hspace{-2pt}sq\hspace{-2pt}+\hspace{-2pt}j$}] at (-90+10*\s:\r) {};
    \node[vertex] at (-90+11*\s:\r) {};
    \node[rotate=12*\s] at (-90+12*\s:\r) {\tiny$\cdots$};
    \node[vertex] at (-90+13*\s:\r) {};
    \node[vertex] at (-90+14*\s:\r) {};
    \node[rotate=15*\s] at (-90+15*\s:\r) {\tiny$\cdots$};
    \node[rotate=16*\s] at (-90+16*\s:\r) {\tiny$\cdots$};
    \node[rotate=17*\s] at (-90+17*\s:\r) {\tiny$\cdots$};
    \node[vertex] at (-90+18*\s:\r) {};

    \node[vertex] at (-90-\s:\r) {};
    \node[rotate=-2*\s] at (-90-2*\s:\r) {\tiny$\cdots$};
    \node[vertex] at (-90-3*\s:\r) {};
\end{tikzpicture}%
    \hfill
    \begin{tikzpicture}
    \def \r{1.25cm}
    \def \rs{0.05*\r}
    \def \l{0.9cm}
    \def \s{16.3636}
    \def \d{1}

    \draw[arealight2] plot [smooth cycle, tension=0.2] coordinates {
		(-90-3*\s+\d:\r) 
		(-90-1*\s-\d:\r) 
		(-90+5*\s:\rs) 
		(-90+10*\s+2*\d:\r) 
		(-90+13*\s-\d:\r)
		(-90+0*\s:\rs) 
    };

    \draw[area3] (-90-1*\s-\d:\r) arc (-90-1*\s-\d:-90-3*\s+\d:\r);
    \draw[area3] (-90+10*\s+\d:\r) arc (-90+10*\s+\d:-90+13*\s-\d:\r);

    \draw[area2] (-90-1*\s-\d:\r) arc (-90-1*\s-\d:-90-3*\s+\d:\r);
    \node[rotate=-2*\s] at (-90-2*\s:\l) {\scriptsize $q\hspace{-2pt}-\hspace{-2pt}1$};
    \draw[area2] (-90+0*\s+\d:\r) arc (-90+0*\s+\d:-90+2*\s:\r);
    \node at (-90+1*\s:\l) {\scriptsize $q$};
    \draw[area2] (-90+4*\s:\r) arc (-90+4*\s:-90+5*\s:\r);
    \node at (-90+4.5*\s:\l) {\scriptsize $q$};
    \draw[area2] (-90+6*\s:\r) arc (-90+6*\s:-90+9*\s-\d:\r);
    \node at (-90+7.5*\s:\l) {\scriptsize $j$};
    \draw[area2] (-90+10*\s+\d:\r) arc (-90+10*\s+\d:-90+13*\s-\d:\r);
    \node[rotate=180+11.3*\s] at (-90+11.5*\s:\l) {\scriptsize $q\hspace{-2pt}-\hspace{-2pt}j\hspace{-2pt}+\hspace{-2pt}2$};
    \draw[area2] (-90+14*\s+\d:\r) arc (-90+14*\s+\d:-90+15*\s:\r);
    \node at (-90+14.5*\s:\l) {\scriptsize $q$};
    \draw[area2] (-90+17*\s:\r) arc (-90+17*\s:-90+18*\s:\r);
    \node at (-90+17.5*\s:\l) {\scriptsize $q$};

    \node[vertex, label={[shift={(0.0,-0.6)}]\scriptsize $i$}] at (-90:\r) {};
    \node[vertex] at (-90+\s:\r) {};
    \node[rotate=2*\s] at (-90+2*\s:\r) {\tiny $\cdots$};
    \node[rotate=3*\s] at (-90+3*\s:\r) {\tiny$\cdots$};
    \node[rotate=4*\s] at (-90+4*\s:\r) {\tiny$\cdots$};
    \node[vertex] at (-90+5*\s:\r) {};
    \node[vertex] at (-90+6*\s:\r) {};
    \node[vertex] at (-90+7*\s:\r) {};
    \node[rotate=8*\s] at (-90+8*\s:\r) {\tiny$\cdots$};
    \node[vertex] at (-90+9*\s:\r) {};
    \node[vertex,label={[shift={(0.0,0.1)}]\scriptsize $i\hspace{-2pt}+\hspace{-2pt}sq\hspace{-2pt}+\hspace{-2pt}j$}] at (-90+10*\s:\r) {};
    \node[vertex] at (-90+11*\s:\r) {};
    \node[rotate=12*\s] at (-90+12*\s:\r) {\tiny$\cdots$};
    \node[vertex] at (-90+13*\s:\r) {};
    \node[vertex] at (-90+14*\s:\r) {};
    \node[rotate=15*\s] at (-90+15*\s:\r) {\tiny$\cdots$};
    \node[rotate=16*\s] at (-90+16*\s:\r) {\tiny$\cdots$};
    \node[rotate=17*\s] at (-90+17*\s:\r) {\tiny$\cdots$};
    \node[vertex] at (-90+18*\s:\r) {};

    \node[vertex] at (-90-\s:\r) {};
    \node[rotate=-2*\s] at (-90-2*\s:\r) {\tiny$\cdots$};
    \node[vertex] at (-90-3*\s:\r) {};
\end{tikzpicture}%
    \hfill
    \begin{tikzpicture}
    \def \r{1.25cm}
    \def \rs{0.05*\r}
    \def \l{0.9cm}
    \def \s{16.3636}
    \def \d{1}

    \draw[arealight2] plot [smooth cycle, tension=0.2] coordinates {
		(-90-3*\s+\d:\r) 
		(-90+0*\s-\d:\r) 
		(-90+5*\s:\rs) 
		(-90+11*\s+2*\d:\r) 
		(-90+13*\s-\d:\r) 
		(-90+0*\s:\rs) 
	};
    
    \draw[area3] (-90-\d:\r) arc (-90-\d:-90-3*\s+\d:\r);
    \draw[area3] (-90+11*\s+\d:\r) arc (-90+11*\s+\d:-90+13*\s-\d:\r);

    \draw[area2] (-90-\d:\r) arc (-90-\d:-90-3*\s+\d:\r);
    \node at (-90-1.5*\s:\l) {\scriptsize $q$};
    \draw[area2] (-90+\s+\d:\r) arc (-90+\s:-90+2*\s:\r);
    \node at (-90+1.5*\s:\l) {\scriptsize $q$};
    \draw[area2] (-90+4*\s:\r) arc (-90+4*\s:-90+6*\s:\r);
    \node at (-90+5*\s:\l) {\scriptsize $q$};
    \draw[area2] (-90+7*\s:\r) arc (-90+7*\s:-90+10*\s-\d:\r);
    \node at (-90+8.5*\s:\l) {\scriptsize $j$};
    \draw[area2] (-90+11*\s+\d:\r) arc (-90+11*\s+\d:-90+13*\s-\d:\r);
    \node[rotate=180+11.8*\s] at (-90+12*\s:\l) {\scriptsize $q\hspace{-2pt}-\hspace{-2pt}j\hspace{-2pt}+\hspace{-2pt}1$};
    \draw[area2] (-90+14*\s+\d:\r) arc (-90+14*\s+\d:-90+15*\s:\r);
    \node at (-90+14.5*\s:\l) {\scriptsize $q$};
    \draw[area2] (-90+17*\s:\r) arc (-90+17*\s:-90+18*\s-\d:\r);
    \node at (-90+17.5*\s:\l) {\scriptsize $q$};

    \node[vertex, label={[shift={(0.0,-0.6)}]\scriptsize $i$}] at (-90:\r) {};
    \node[vertex] at (-90+\s:\r) {};
    \node[rotate=2*\s] at (-90+2*\s:\r) {\tiny $\cdots$};
    \node[rotate=3*\s] at (-90+3*\s:\r) {\tiny$\cdots$};
    \node[rotate=4*\s] at (-90+4*\s:\r) {\tiny$\cdots$};
    \node[vertex] at (-90+5*\s:\r) {};
    \node[vertex] at (-90+6*\s:\r) {};
    \node[vertex] at (-90+7*\s:\r) {};
    \node[rotate=8*\s] at (-90+8*\s:\r) {\tiny$\cdots$};
    \node[vertex] at (-90+9*\s:\r) {};
    \node[vertex,label={[shift={(0.0,0.1)}]\scriptsize $i\hspace{-2pt}+\hspace{-2pt}sq\hspace{-2pt}+\hspace{-2pt}j$}] at (-90+10*\s:\r) {};
    \node[vertex] at (-90+11*\s:\r) {};
    \node[rotate=12*\s] at (-90+12*\s:\r) {\tiny$\cdots$};
    \node[vertex] at (-90+13*\s:\r) {};
    \node[vertex] at (-90+14*\s:\r) {};
    \node[rotate=15*\s] at (-90+15*\s:\r) {\tiny$\cdots$};
    \node[rotate=16*\s] at (-90+16*\s:\r) {\tiny$\cdots$};
    \node[rotate=17*\s] at (-90+17*\s:\r) {\tiny$\cdots$};
    \node[vertex] at (-90+18*\s:\r) {};

    \node[vertex] at (-90-\s:\r) {};
    \node[rotate=-2*\s] at (-90-2*\s:\r) {\tiny$\cdots$};
    \node[vertex] at (-90-3*\s:\r) {};
\end{tikzpicture}%
    \hfill
    \begin{tikzpicture}
    \def \r{1.25cm}
    \def \rs{0.05*\r}
    \def \l{0.9cm}
    \def \s{16.3636}
    \def \d{1}

    \draw[arealight2] plot [smooth cycle, tension=0.2] coordinates {
	    (-90-3*\s+\d:\r) 
	    (-90-1*\s-\d:\r) 
	    (-90+5*\s:\rs) 
	    (-90+11*\s+2*\d:\r) 
	    (-90+13*\s-\d:\r) 
	    (-90+0*\s:\rs) 
    };
    
    \draw[area2] (-90-1*\s-\d:\r) arc (-90-1*\s-\d:-90-3*\s+\d:\r);
    \node[rotate=-2*\s] at (-90-2*\s:\l) {\scriptsize $q\hspace{-2pt}-\hspace{-2pt}1$};
    \draw[area2] (-90+0*\s+\d:\r) arc (-90+0*\s+\d:-90+2*\s:\r);
    \node at (-90+1*\s:\l) {\scriptsize $q$};
    \draw[area2] (-90+4*\s:\r) arc (-90+4*\s:-90+5*\s:\r);
    \node at (-90+4.5*\s:\l) {\scriptsize $q$};
    \draw[area2] (-90+6*\s:\r) arc (-90+6*\s:-90+10*\s-\d:\r);
    \node[rotate=180+8*\s] at (-90+8*\s:\l) {\scriptsize $j\hspace{-2pt}+\hspace{-2pt}1$};
    \draw[area2] (-90+11*\s+\d:\r) arc (-90+11*\s+\d:-90+13*\s-\d:\r);
    \node[rotate=180+11.8*\s] at (-90+12*\s:\l) {\scriptsize $q\hspace{-2pt}-\hspace{-2pt}j\hspace{-2pt}+\hspace{-2pt}1$};
    \draw[area2] (-90+14*\s+\d:\r) arc (-90+14*\s+\d:-90+15*\s:\r);
    \node at (-90+14.5*\s:\l) {\scriptsize $q$};
    \draw[area2] (-90+17*\s:\r) arc (-90+17*\s:-90+18*\s-\d:\r);
    \node at (-90+17.5*\s:\l) {\scriptsize $q$};

    \node[vertex, label={[shift={(0.0,-0.6)}]\scriptsize $i$}] at (-90:\r) {};
    \node[vertex] at (-90+\s:\r) {};
    \node[rotate=2*\s] at (-90+2*\s:\r) {\tiny $\cdots$};
    \node[rotate=3*\s] at (-90+3*\s:\r) {\tiny$\cdots$};
    \node[rotate=4*\s] at (-90+4*\s:\r) {\tiny$\cdots$};
    \node[vertex] at (-90+5*\s:\r) {};
    \node[vertex] at (-90+6*\s:\r) {};
    \node[vertex] at (-90+7*\s:\r) {};
    \node[rotate=8*\s] at (-90+8*\s:\r) {\tiny$\cdots$};
    \node[vertex] at (-90+9*\s:\r) {};
    \node[vertex,label={[shift={(0.0,0.1)}]\scriptsize $i\hspace{-2pt}+\hspace{-2pt}sq\hspace{-2pt}+\hspace{-2pt}j$}] at (-90+10*\s:\r) {};
    \node[vertex] at (-90+11*\s:\r) {};
    \node[rotate=12*\s] at (-90+12*\s:\r) {\tiny$\cdots$};
    \node[vertex] at (-90+13*\s:\r) {};
    \node[vertex] at (-90+14*\s:\r) {};
    \node[rotate=15*\s] at (-90+15*\s:\r) {\tiny$\cdots$};
    \node[rotate=16*\s] at (-90+16*\s:\r) {\tiny$\cdots$};
    \node[rotate=17*\s] at (-90+17*\s:\r) {\tiny$\cdots$};
    \node[vertex] at (-90+18*\s:\r) {};

    \node[vertex] at (-90-\s:\r) {};
    \node[rotate=-2*\s] at (-90-2*\s:\r) {\tiny$\cdots$};
    \node[vertex] at (-90-3*\s:\r) {};
\end{tikzpicture}%
    \caption{
        Depicted are four partitions with their induced cycle partitions.
        For clarity, we show in contrast to previous figures only the cycle partition and the component of the inducing partition that is distinct from the cycle partition.
        The four vectors $\psi$ on the right-hand side of \eqref{eq:construction-long-chords} are given by the difference of the feasible solutions induced by the partitions and the cycle partitions.
        }
    \label{fig:long-chord-unit-vectors}
\end{figure}

\section{Conclusion}
We establish exact conditions under which the $q$-chorded $k$-cycle inequalities described by Müller and Schulz in 2002 induce facets of the clique partitioning polytope.
For $q \in \{2, \tfrac{k-1}{2}\}$, these conditions specialize to properties previously known.
In their general form, they imply the existence of many facets induced by $q$-chorded $k$-cycle inequalities for $2 < q < \tfrac{k-1}{2}$ previously unknown.
The conditions under which chorded cycle inequalities do \emph{not} induce facets are particularly interesting because the faces induced by such inequalities are contained in other facets currently unknown.

\section*{Acknowledgements}
This work is partly supported by BMFTR (Federal Ministry of Research, Technology and Space) in DAAD project 57616814 (SECAI, School of Embedded Composite AI, \url{https://secai.org/}) as part of the program Konrad Zuse Schools of Excellence in Artificial Intelligence.

\bibliographystyle{abbrvurl}
\bibliography{references}

\appendix
\section{Appendix}

\subsection{Proof of \Cref{lem:valid}}\label{app:proof-valid}

We prove that the $q$-chorded $k$-cycle inequality \eqref{eq:q-chored-k-cycle} is valid for the clique partitioning polytope by showing that it can be obtained by a non-negative linear combination of triangle inequalities \eqref{eq:triangle} and box inequalities \eqref{eq:box}, and rounding down the right-hand side.
In particular, this shows that the $q$-chorded $k$-cycle inequalities are Chv\'atal-Gomory cuts of the system of triangle and box inequalities.

For $i \in \Z_k$ the inequality
\begin{align}\label{eq:cycle-inequality}
    \sum_{j = 0}^{q-1} x_{\{i+j, i+j+1\}} - x_{\{i, i+q\}} \leq q - 1
\end{align}
is a so-called \emph{cycle inequality} and is known to be valid for the clique partitioning problem \cite{chopra1993partition}.
In fact, \eqref{eq:cycle-inequality} is obtained by summing the triangle inequalities
\[
    x_{\{i, i+j\}} + x_{\{i+j, i+j+1\}} - x_{\{i, i+j+1\}} \leq 1
\]
for $j = 1,\dots,q-1$.
By adding $(q-1)$ times the constraint $-x_{\{i, i+q\}} \leq 0$ to \eqref{eq:cycle-inequality} we obtain the valid inequality
\begin{align}\label{eq:relaxed-cycle-inequality}
    \sum_{j = 0}^{q-1} x_{\{i+j, i+j+1\}} - q x_{\{i, i+q\}} \leq q - 1
\end{align}
which we call a \emph{relaxed cycle inequality}.
By summing relaxed cycle inequalities \eqref{eq:relaxed-cycle-inequality} for all $i \in \Z_k$ we obtain
\begin{align}\label{eq:summed-cycle-inequalities}
    \sum_{i \in Z_k} \left( qx_{\{i,i+1\}} - qx_{\{i, i+q\}} \right) \leq k (q-1) \enspace .
\end{align}
Dividing \eqref{eq:summed-cycle-inequalities} by $q$ and rounding the right-hand side down yields the desired result with
\begin{align*}
    \left\lfloor \frac{k (q-1)}{q} \right\rfloor
    = \left\lfloor k - \frac{k}{q} \right\rfloor
    = k - \left\lceil \frac{k}{q} \right\rceil \enspace .
\end{align*}
\qed

\subsection{Proof of \Cref{lem:part}}\label{app:proof-part}
We prove that for any $q, k \in \N$ with $2 \leq q \leq k / 2$ the feasible solution $x^\Pi$ associated with a partition $\Pi$ of $\Z_k$ satisfies \eqref{eq:q-chored-k-cycle} at equality if and only if \ref{item:eq-a} or \ref{item:eq-b} is fulfilled.

First, we note that the number of edges $\set{\set{i,i+1} \mid i \in \Z_k}$ whose nodes are in the same set of the inducing partition is equal to the number of nodes minus the number of sets in the cycle partition:
\begin{align}\label{eq:cycle_edges}
    \sum_{i \in \Z_k} x_{\{i,i+1\}} = k - |\Pi^{\cyc}|\enspace .
\end{align}
This relation will be essential for showing sufficiency and necessity of the specified conditions.

We first show sufficiency.
Assume \ref{item:eq-a} is fulfilled.
As $|\Pi^{\cyc}|  = \lceil k / q \rceil$ by \ref{item:eq-a1}, we have $\sum_{i \in \Z_k} x_{\{i,i+1\}} = k - \lceil k / q \rceil$ by \eqref{eq:cycle_edges}.
Furthermore, by \ref{item:eq-a2}, all sets in the cycle partition are of size less than or equal to $q$, and by \ref{item:eq-a3}, there lie at least $q$ nodes between distinct sets of the cycle partition that are subsets of the same set in the inducing partition. Thus, $\sum_{i \in \Z_k} x_{\{i, i+q\}} = 0$.
Consequently, \eqref{eq:q-chored-k-cycle} is fulfilled at equality.

Assume now that \ref{item:eq-b} is fulfilled.
As $|\Pi^{\cyc}|  = \lceil k / q \rceil - 1$ by \ref{item:eq-b1}, we have $\sum_{i \in \Z_k} x_{\{i,i+1\}} = k - \lceil k / q \rceil + 1$ by \eqref{eq:cycle_edges}.
Furthermore, by \ref{item:eq-b2}, one set of the cycle partition is of size $q+1$, all other sets are of size $q$, and consequently, there also lie at least $q$ nodes between distinct sets of the cycle partition that are subsets of the same set in the inducing partition. Thus, $\sum_{i \in \Z_k} x_{\{i, i+q\}} = 1$.
Consequently, \eqref{eq:q-chored-k-cycle} is fulfilled at equality.
This finishes the proof of sufficiency.

We now show necessity. 
Let $x^\Pi$ be the feasible solution associated with a partition $\Pi$ of $\Z_k$ that satisfies \eqref{eq:q-chored-k-cycle} at equality.
By \eqref{eq:cycle_edges}, we must only show that \ref{item:eq-a} or \ref{item:eq-b} holds if 
\begin{align}\label{eq:equality_simplified}
    \sum_{i \in \Z_k} x_{\{i, i+q\}} = \left\lceil \frac{k}{q} \right\rceil - |\Pi^{\cyc}|\enspace .
\end{align}
We first show that \ref{item:eq-a} is satisfied if $\sum_{i \in \Z_k} x_{\{i, i+q\}} = 0$ and then that \ref{item:eq-b} is satisfied if $\sum_{i \in \Z_k} x_{\{i, i+q\}} > 0$.

If $\sum_{i \in \Z_k} x_{\{i, i+q\}} = 0$, we directly get $|\Pi^{\cyc}| = \lceil \frac{k}{q} \rceil$ by \eqref{eq:equality_simplified}, so \ref{item:eq-a1} is satisfied.
Further, $\sum_{i \in \Z_k} x_{\{i, i+q\}} = 0$ implies $x_{\{i, i+q\}} = 0$ for all $i \in \Z_k$, so there is no $q$-chord whose nodes are in the same set.
Consequently, all sets must be smaller than or equal to $q$, i.e.~$|U| \leq q$ for all $U \in \Pi^{\cyc}$, and \ref{item:eq-a2} holds.
Another implication of this is that distinct sets of $\Pi^{\cyc}$ that are subsets of the same set of $\Pi$ must be separated by at least $q$ nodes, i.e.~\ref{item:eq-a3} holds.
Thus, \ref{item:eq-a} is satisfied for $\sum_{i \in \Z_k} x_{\{i, i+q\}} = 0$.

Assume now that $\sum_{i \in \Z_k} x_{\{i, i+q\}} > 0$ and thus, by \eqref{eq:equality_simplified}, that $\lceil \frac{k}{q} \rceil > |\Pi^{\cyc}|$.
If $x_{\{i, i+q\}} = 1$ for some $i \in \Z_k$, there either exists a set in the cycle partition containing nodes $i$ and $i+q$ that is of size at least $q+1$, or there exist two distinct sets in the cycle partition that contain these nodes and are subsets of the same set of the inducing partition.
With $m = \sum_{i \in \mathbb{N}} i \left|\left\{U \in \Pi^{\cyc} \mid |U| = q+i\right\}\right|$ denoting the number of nodes that exceed the size of $q$ in their set, we thus get 
\begin{align}\label{eq:geq-n}
    \sum_{i \in \Z_k} x_{\{i, i+q\}} \geq m \enspace .
\end{align}
Equality holds if all $q$-chords connected in the inducing partition are also connected in the cycle partition.
Furthermore, the number of sets in the cycle partition is bounded,
\begin{align}\label{eq:k-n}
    |\Pi^{\cyc}| \geq \left\lceil\frac{k - m}{q}\right\rceil\enspace ,
\end{align}
since the remaining $k-m$ nodes in the cycle can form sets of at most size $q$.
Combining \eqref{eq:equality_simplified} with \eqref{eq:geq-n} and \eqref{eq:k-n}, we get
\begin{align}
    \left\lceil\frac{k - m}{q}\right\rceil + m \leq  \left\lceil \frac{k}{q} \right\rceil\enspace .
\end{align}
Clearly, this inequality can only be satisfied for either $m = 0$, or $m=1$ and $k = 1$ mod $q$. 
Moreover, as the inequality $\lceil \frac{k}{q} \rceil > |\Pi^{\cyc}| \geq \lceil\frac{k - m}{q}\rceil$ is strict, only the case with $m=1$ is possible. 
Thus, $|\Pi^{\cyc}| = \lceil\frac{k - m}{q}\rceil = \lceil\frac{k}{q}\rceil - 1$, i.e.~\ref{item:eq-b1} holds.
Furthermore, as $m = 1$, there exists exactly one set $U \in \Pi^{\cyc}$ of size $q+1$ and no set in the cycle partition of size greater than $q+1$. 
Noting that all nodes must be in some set of the cycle partition, $k = \sum_{U' \in \Pi^{\cyc}} |U'|$ and the already established fact that $|\Pi^{\cyc}| = \lceil\frac{k}{q}\rceil - 1$, we get $|U'| = q$ for all $U' \in \Pi^{\cyc} \setminus \{U\}$, i.e.~\ref{item:eq-b2} holds.
This completes the proof.
\qed

\end{document}